\newtheorem{theorem}{Theorem}
\newtheorem{lemma}{Lemma}
\newcommand*{\dif}{\mathop{}\!\mathrm{d}}
\newcommand{\figwidth}{8.0}
\begin{document}
\title{Performance Analysis and Optimization for Movable Antenna Aided Wideband Communications}
\author{Lipeng Zhu, ~\IEEEmembership{Member,~IEEE,}
		Wenyan Ma,~\IEEEmembership{Graduate Student Member,~IEEE,}
		Zhenyu Xiao,~\IEEEmembership{Senior Member,~IEEE,}
		and Rui Zhang,~\IEEEmembership{Fellow,~IEEE}
	\vspace{-0.5 cm}
	\thanks{L. Zhu and W. Ma are with the Department of Electrical and Computer Engineering, National University of Singapore, Singapore 117583 (e-mail: zhulp@nus.edu.sg, wenyan@u.nus.edu).}
	\thanks{Z. Xiao is with the School of Electronic and Information Engineering, Beihang University, Beijing, China 100191. (e-mail: xiaozy@buaa.edu.cn)}
	\thanks{R. Zhang is with School of Science and Engineering, Shenzhen Research Institute of Big Data, The Chinese University of Hong Kong, Shenzhen, Guangdong 518172, China (e-mail: rzhang@cuhk.edu.cn). He is also with the Department of Electrical and Computer Engineering, National University of Singapore, Singapore 117583 (e-mail: elezhang@nus.edu.sg). }
}

\maketitle


\begin{abstract}
	Movable antenna (MA) has emerged as a promising technology to enhance wireless communication performance by enabling the local movement of antennas at the transmitter (Tx) and/or receiver (Rx) for achieving more favorable channel conditions. As the existing studies on MA-aided wireless communications have mainly considered narrow-band transmission in flat fading channels, we investigate in this paper the MA-aided wideband communications employing orthogonal frequency division multiplexing (OFDM) in frequency-selective fading channels. Under the general multi-tap field-response channel model, the wireless channel variations in both space and frequency are characterized with different positions of the MAs. Unlike the narrow-band transmission where the optimal MA position at the Tx/Rx simply maximizes the single-tap channel amplitude, the MA position in the wideband case needs to balance the amplitudes and phases over multiple channel taps in order to maximize the OFDM transmission rate over multiple frequency subcarriers. First, we derive an upper bound on the OFDM achievable rate in closed form when the size of the Tx/Rx region for antenna movement is arbitrarily large. Next, we develop a parallel greedy ascent (PGA) algorithm to obtain locally optimal solutions to the MAs' positions for OFDM rate maximization subject to finite-size Tx/Rx regions. To reduce computational complexity, a simplified PGA algorithm is also provided to optimize the MAs' positions more efficiently. Simulation results demonstrate that the proposed PGA algorithms can approach the OFDM rate upper bound closely with the increase of Tx/Rx region sizes and outperform conventional systems with fixed-position antennas (FPAs) under the wideband channel setup.
\end{abstract}
\begin{IEEEkeywords}
	Movable antenna (MA), wideband communication, orthogonal frequency division multiplexing (OFDM), antenna position optimization.
\end{IEEEkeywords}

%
\IEEEpeerreviewmaketitle

\section{Introduction}
\IEEEPARstart{T}{he} future wireless communication systems are anticipated to achieve ultra-high transmission rates and reliability. Due to the scarcity of spectrum resources, numerous research endeavors have been devoted to enhancing the spectral efficiency in wireless communications. In this context, multi-antenna or multiple-input multiple-output (MIMO) \cite{Paulraj2004Anover,Stuber2004broadb,Larsson2014massive,Wan2021HoloRIS} has been recognized as a revolutionary technology for exploring the new degrees of freedom (DoFs) in the spatial domain. By exploiting the spatial multiplexing and diversity gains, MIMO technologies have substantially enhanced the transmission rate and reliability for wireless communication systems. However, the full utilization of spatial DoFs in conventional MIMO systems is hindered by discrete and fixed-position deployment of antennas at the transmitter (Tx) and receiver (Rx).

\subsection{Overview of Movable Antenna (MA)}
To overcome the fundamental limitation of conventional fixed-position antennas (FPAs), the flexible-position antenna has recently been regarded as a promising technology to enhance the MIMO communication performance, which is known as the fluid antenna system (FAS) \cite{wong2020limit,Wong2021fluid,wong2022bruce} or MA system  \cite{zhu2022MAmodel,ma2022MAmimo,zhu2023MAMag}\footnote{In fact, both fluid antenna and movable antenna have their longstanding presence in the field of antenna technology \cite{Tam2008patent,balanis2008mems} and were recently introduced for investigation in wireless communications, whereas both FAS and MA system are interchangeable terms in recognizing the potential of antenna position flexibility and are not limited to any particular way of implementation. A recent article \cite{zhu2024historical} attempted to clarify on the origins of the two terminologies.}. With the capability of flexible positioning/movement, the MAs can be positioned at locations yielding improved channel conditions in the continuous Tx/Rx regions. Thus, MA-aided wireless communication systems can exploit the full DoFs in the spatial domain, as compared to conventional FPA systems with channels suffering from random fading \cite{goldsmith2005wireless,Molisch2004MIMOsys,Sanayei2004antennas}. By leveraging the spatial diversity, the MA position optimization can boost the desired signal power as well as suppress the undesired interference \cite{zhu2023MAMag,zhu2022MAmodel}. Moreover, the spatial multiplexing performance can be enhanced for MA-aided MIMO or multi-user communication systems by reshaping the MIMO channel matrices via MAs' joint position optimization \cite{ma2022MAmimo,zhu2023MAmultiuser}.

Note that the size of space for antenna movement is generally in the order of signal wavelength \cite{zhu2023MAMag,zhu2022MAmodel,ma2022MAmimo}, which eases the implementation of MA systems, especially in the high frequency bands with small wavelengths \cite{zhu2019millim,Ning2023THzbeam}. Specifically, an architecture of MA-mounted Tx/Rx was presented in \cite{zhu2023MAMag}, which comprises a conventional communication module and an antenna positioning module controlled by a central unit. The MA can be flexibly moved in a three-dimensional (3D) space with the aid of mechanical slides and step motors for improving the communication performance. Besides, a prototype of MA systems was developed in \cite{Zhuravlev2015experi} for radar applications, where the Tx and Rx antennas can be moved over line segments with the aid of motor-based drivers. Moreover, the authors in \cite{Basbug2017design} designed an MA array for synthesizing flexible beamforming patterns, where each antenna element can be locally moved along a semicircular trajectory by step motors. Note that the motor-enabled MA architectures usually require an extended area for installing slides and motors, which may not be applicable to devices with a small size. In contrast, the micro-electromechanical systems (MEMS)-enabled MA \cite{balanis2008mems} has the advantage of size miniaturization, high positioning accuracy, and low power consumption, which is more suitable to be integrated in compact devices. In addition, the liquid-based antenna and the pixel-based antenna presented in \cite{wong2022bruce} are two alternative ways for implementing the FAS/MA with fast adaptation in antenna position.

\subsection{Related Works}

The exploration of MA-aided communications started from the perspective of point-to-point transmissions \cite{zhu2022MAmodel,ma2022MAmimo,Do2021reconf,Do2022TeMIMO,chen2023joint,wong2020limit,Wong2021fluid,New2024fluid}. In \cite{zhu2022MAmodel}, the field-response based channel model was proposed for MA-enabled communication systems, aimed at characterizing the continuous variations of wireless multipath channels in both the Tx and Rx regions. By leveraging this channel model, the signal-to-noise ratio (SNR) enhancement of an MA system over its FPA counterpart was analyzed. The results revealed that an increased number of channel paths and an enlarged region for antenna movement can yield a more significant SNR improvement. In \cite{ma2022MAmimo}, the channel capacity of MA-aided MIMO systems was characterized by considering the additional DoFs in optimizing multiple MAs' positions at both the Tx and Rx jointly. It was shown that the MA position optimization can not only increase the total channel power but also balance the singular values of the resulted MIMO channel matrix flexibly such that the MA-MIMO channel capacity is maximized. The authors in \cite{Do2021reconf} and \cite{Do2022TeMIMO} proposed the line-of-sight (LoS) MIMO transmission empowered by a rotational uniform linear array (ULA), which can be regarded as a special way of implementing MAs in confined regions. It was revealed that the upper bound on the capacity of LoS MIMO systems can be asymptotically approached by rotating the Tx/Rx ULA with an SNR-dependent angle. Moreover, the authors in \cite{chen2023joint} investigated the joint beamforming and antenna movement deign for MA-enhanced MIMO systems based on statistical channel state information (CSI) between the Tx and Rx regions. A constrained stochastic successive convex approximation (CSSCA) algorithm was developed to maximize the ergodic capacity, which can reap 20\% improvement compared to conventional FPA-MIMO systems. In addition, the spatial correlation channel model was adopted in \cite{wong2020limit,Wong2021fluid,New2024fluid} to characterize the outage probability and ergodic capacity of FASs, which demonstrated their superior performance compared to FPA systems with maximum ratio combining.

The superiority of MAs over FPAs has also been validated in multiuser communication systems \cite{zhu2023MAmultiuser,hu2023power,xiao2023multiuser,cheng2023sum,qin2023antenna,wu2023movable,Wong2022fluid,wong2023slowfluid,wong2023fastfluid,Wong2023opport,zhu2023MAarray,Leshem2021align,ma2023multi,hu2023secure,hu2023comp}. In \cite{zhu2023MAmultiuser}, the MAs were employed at the users' side to improve the multiple access channel (MAC) capacity. By jointly optimizing the MA's position and transmit power at each user as well as the receive combining matrix at the base station (BS), the total transmit power of multiple users can be significantly decreased for meeting a given rate requirement of each user. This is due to the MAs' position optimization that can help reduce the correlation between users' channel vectors and thus alleviate the multiuser interference. Under this setup, the authors in \cite{hu2023power} developed a projected gradient descent-based algorithm to further reduce the computational complexity. In \cite{xiao2023multiuser}, the MA-enabled BS was investigated to enhance the user fairness performance in the uplink, where a particle swarm optimization (PSO)-based algorithm was developed to maximize the minimum achievable rate among multiple users, subject to a constraint on minimum inter-MA distance at the BS. Besides, the sum-rate maximization and transmit-power minimization problems for downlink transmission between the MA-enabled BS and multiple users were investigated in \cite{cheng2023sum} and \cite{qin2023antenna}, respectively. A variety of optimization techniques have been adopted to obtain suboptimal solutions for the MAs’ positions and beamforming matrix at the BS, e.g., fractional programming, alternating optimization, gradient descent, penalty method, and successive convex approximation (SCA). The authors in \cite{wu2023movable} considered the discrete antenna positioning and beamforming for MA-enhanced multiuser communication systems. To minimize the total transmit power while guaranteeing the minimum rate requirement of each user, the optimal solution was obtained by an iterative algorithm based on the generalized Bender’s decomposition. Following the principle of discrete antenna port selection, the outage probability and spatial multiplexing gain were characterized in \cite{Wong2022fluid,wong2023slowfluid,wong2023fastfluid,Wong2023opport} under different setups of multiple access aided by the FAS. Moreover, it was revealed in \cite{zhu2023MAarray} that for MA array-enhanced beamforming, the full array gain over the direction of desired signals and the interference nulling over undesired directions can be simultaneously achieved, where the optimal antenna positioning and beamforming vectors were derived in closed form. In \cite{Leshem2021align}, the authors demonstrated that by only adjusting the distance of adjacent antennas in a sufficiently large region, the interference from an arbitrary large number of spatial directions can be nulled to any desired level, while the interference-free SNR is maintained. In addition, the investigations in \cite{ma2023multi,hu2023secure,hu2023comp} substantiated the superiority of MA arrays in improving the performance of multi-beam forming, secure communication, and coordinated multi-point (CoMP) transmission.

Note that the performance gain of MAs requires the knowledge of accurate CSI from any position in the Tx region to any position in the Rx region. To this end, a successive transmitter-receiver compressed sensing (STRCS) channel estimation method was proposed in \cite{ma2023MAestimation} to sequentially resolve the angles of departure (AoDs), the angles of arrival (AoAs), and the complex coefficients for multiple channel paths, based on a finite number of channel measurements at designated MA locations in the Tx and Rx regions. To avoid cumulative errors caused by sequential estimation, a joint AoD, AoA, and path coefficient estimation framework was proposed in \cite{xiao2023channel} based on the compressed sensing theory, where the criteria for MA movement/measurement positions were provided to guarantee the successful recovery of channel paths in the angular domain. Moreover, a successive Bayesian reconstructor (S-BAR) was proposed in \cite{zhang2023successive} to estimate the channel response from an FPA at the Tx to all candidate positions/ports of MAs at the Rx. This approach models the channel as a stochastic process and successively eliminates the channel uncertainty by kernel-based sampling and regression at different locations of MAs.

\subsection{Motivation and Contribution}
It is worth pointing out that the aforementioned studies \cite{zhu2022MAmodel,ma2022MAmimo,Do2021reconf,Do2022TeMIMO,chen2023joint,wong2020limit,Wong2021fluid,New2024fluid,zhu2023MAmultiuser,hu2023power,xiao2023multiuser,cheng2023sum,qin2023antenna,wu2023movable,Wong2022fluid,wong2023slowfluid,wong2023fastfluid,Wong2023opport,zhu2023MAarray,Leshem2021align,ma2023multi,hu2023secure,hu2023comp,ma2023MAestimation,xiao2023channel,zhang2023successive} mainly concentrate on the design of MA-enabled systems for narrow-band communications under flat fading channels. However, there is very limited work exploring the potential performance enhancement by MAs in the general wideband communications under frequency-selective fading channels. To fill this gap, we investigate in this paper the channel modeling, performance analysis, and performance optimization for MA-aided wideband communication systems. The main contributions of this paper are summarized as follows:
\begin{itemize}
	\item We consider an MA-aided orthogonal frequency division multiplexing (OFDM) wideband communication system, where the Tx and Rx are each equipped with an MA which can be moved continuously in a 3D region. A general multi-tap field-response channel model is adopted to characterize the wireless channel variations in both space and frequency with respect to (w.r.t.) different positions of the MAs at the Tx and Rx sides. Then, an optimization problem is formulated to maximize the OFDM achievable rate by jointly optimizing the positions for MAs and the transmit power allocation over OFDM subcarriers.
	\item Next, analytical results are provided to unveil the asymptotic performance of the MA-OFDM communication system. Specifically, we demonstrate the great potential of MA positioning for achieving the desired channel impulse response (CIR) with maximum channel gains yet arbitrary channel phases over all clustered delay taps. Based on this finding, an upper bound on the OFDM achievable rate is derived in closed form in the high-SNR regime when the size of the Tx/Rx region for antenna movement is arbitrarily large.
	\item Furthermore, under the practical constraint of finite-size Tx/Rx regions, we develop a parallel greedy ascent (PGA) algorithm for optimizing the MA positioning vectors, with the optimal power allocation given by the water-filling criterion. To reduce computational complexity, a simplified PGA algorithm is also provided to optimize the MAs' positions more efficiently by maximizing the total channel power gain over all delay taps, where the optimization of transmit power allocation is not needed over the iterations.
	\item Finally, extensive simulation results are presented to evaluate the performance of the proposed MA-OFDM wideband communication system design and optimization. It is shown that the proposed PGA algorithms can approach the OFDM rate upper bound closely with the increase of Tx/Rx region sizes and outperform conventional systems with FPAs under the wideband channel setup. It is also revealed that the proposed MA-OFDM system can yield more significant performance gains over its FPA counterpart under wireless channels with a small number of clustered delay taps each encompassing a large number of independent paths.
\end{itemize}

\subsection{Organization and Notation}
The rest of this paper is organized as follows. In Section II, we introduce the system model and the multi-tap field-response channel model for the MA-OFDM communication system. In Section III, we show the main analytical results for MA-OFDM communication systems. In Section IV, we develop the PGA algorithm and its simplified version for maximizing the achievable rate. Simulation results are provided in Section V and this paper is finally concluded in Section VI.

\textit{Notation}: $a$, $\mathbf{a}$, $\mathbf{A}$, and $\mathcal{A}$ denote a scalar, a vector, a matrix, and a set, respectively. $(\cdot)^{\rm{T}}$, $(\cdot)^{*}$, and $(\cdot)^{\rm{H}}$ denote transpose, conjugate, and conjugate transpose, respectively. $\mathcal{A} \setminus \mathcal{B}$ and $\mathcal{A} \cup \mathcal{B}$ represent the subtraction set and union set of $\mathcal{A}$ and $\mathcal{B}$, respectively. $|\mathcal{A}|$ and $\mathcal{L}(\mathcal{A})$ represent the cardinality and Lebesgue measure of set $\mathcal{A}$, respectively. $\mathcal{CN}(0,\sigma^{2})$ denotes the circularly symmetric complex Gaussian (CSCG) distribution with mean zero and covariance $\sigma^{2}$. $\Pr\{\cdot\}$ denotes the probability of an event.  $\mathbb{Z}$, $\mathbb{Q}$, $\mathbb{R}$, and $\mathbb{C}$ represent the sets of integer, rational, real, and complex numbers, respectively. $|\cdot|$ and $\angle(\cdot)$ denote the amplitude and the phase of a complex number or complex vector, respectively. $\|\cdot\|_{1}$ and $\|\cdot\|_{2}$ denote the 1-norm and 2-norm of a vector, respectively. $(\mathbf{a} \mod b)$ is the modulo operation of each element in $\mathbf{a}$ divided by integer $b$. $\mathrm{diag}\{\mathbf{a}\}$ is a diagonal matrix with the element in row $i$ and column $i$ equal to the $i$-th element of vector $\mathbf{a}$. $\partial(\cdot)$ denotes the partial differential of a function. $\nabla_{\mathbf{x}} f(\mathbf{x})$ represent the gradient of $f(\mathbf{x})$ w.r.t. $\mathbf{x}$. $\mathbf{0}_{L}$ denotes an $L$-dimensional row vector with all elements equal to 0. $\mathbf{I}_{L}$ denotes an identical matrix of size $L \times L$.

\begin{figure*}[t]
	\begin{center}
		\includegraphics[width=14 cm]{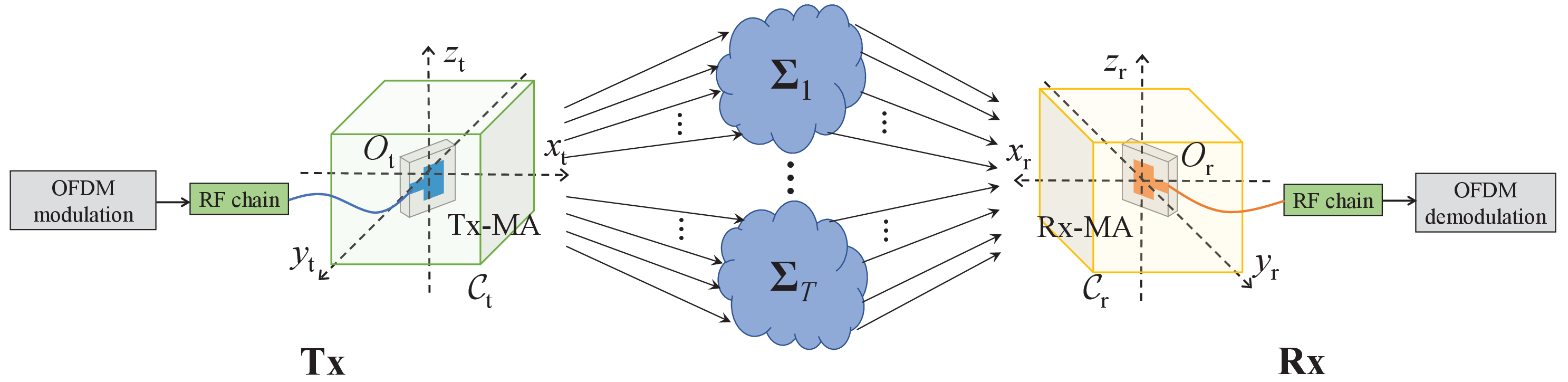}
		\caption{Illustration of the considered MA-OFDM wideband communication system.}
		\label{fig:system}
	\end{center}
\end{figure*}

\section{System Model}
As shown in Fig. 1, the Tx and Rx are each equipped with a single MA to enhance their wideband communication performance. The local 3D coordinates of the Tx-MA and the Rx-MA are denoted as $\mathbf{t}=[x_{\mathrm{t}}, y_{\mathrm{t}}, z_{\mathrm{t}}]^{\mathrm{T}}$ and $\mathbf{r}=[x_{\mathrm{r}}, y_{\mathrm{r}}, z_{\mathrm{r}}]^{\mathrm{T}}$, respectively. With the aid of driver components, the Tx-MA and Rx-MA can be moved in 3D regions $\mathcal{C}_{\mathrm{t}}$ and $\mathcal{C}_{\mathrm{r}}$, respectively. Without loss of generality, we assume in this paper that the regions for antenna moving are cuboids, i.e., $\mathcal{C}_{\mathrm{t}}=[x_{\mathrm{t}}^{\min}, x_{\mathrm{t}}^{\max}] \times [y_{\mathrm{t}}^{\min}, y_{\mathrm{t}}^{\max}] \times [z_{\mathrm{t}}^{\min}, z_{\mathrm{t}}^{\max}]$ and $\mathcal{C}_{\mathrm{r}}=[x_{\mathrm{r}}^{\min}, x_{\mathrm{r}}^{\max}] \times [y_{\mathrm{r}}^{\min}, y_{\mathrm{r}}^{\max}] \times [z_{\mathrm{r}}^{\min}, z_{\mathrm{r}}^{\max}]$.

\subsection{Channel Model}
For the considered MA-OFDM system, denote $B$ as the system bandwidth and $M$ as the total number of subcarriers. Given the finite bandwidth, the baseband equivalent channel can be characterized by multiple delay taps, each spanning a time interval of $1/B$. In particular, we denote the maximum number of delay taps for the baseband equivalent channel as $T$, and thus the length of cyclic prefix (CP) should be $M_{\mathrm{CP}} \geq T$. Since the distance between the Tx and Rx is generally much larger than the size of antenna-moving regions, the far-field condition holds between the Tx-MA and Rx-MA. For example, if the carrier frequency is 5.2 GHz and the size of Tx/Rx regions for antenna movement is 5 wavelengths, the corresponding Rayleigh distance is no larger than 3 meters \cite{kraus2002antennas}, which is easily surpassed by the practical Tx-Rx distance such that the far-field propagation condition is guaranteed. Thus, the plane-wave model can be used to characterize the multi-tap multi-path channel between the transceivers. In other words, the AoD, the AoA, and the amplitude of complex coefficient for each channel path between the Tx and Rx regions are invariant, while only the phase of each path's coefficient changes with the positions of MAs \cite{zhu2023MAMag,zhu2022MAmodel}. 

\begin{figure}[t]
	\begin{center}
		\includegraphics[width=8.8 cm]{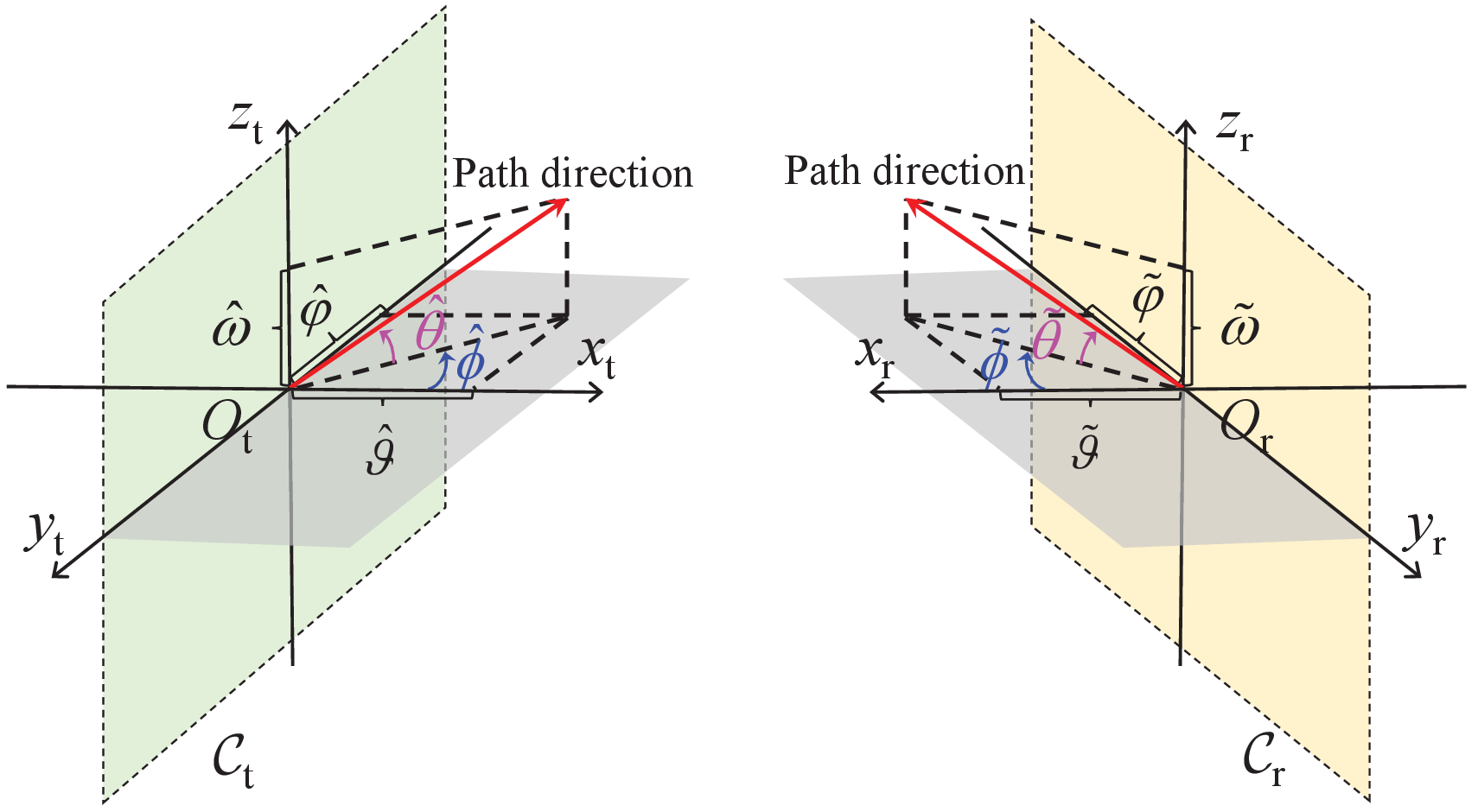}
		\caption{Illustration of the local 3D coordinate systems at the Tx/Rx and the corresponding AoDs/AoAs.}
		\label{fig:Coordinates}
	\end{center}
\end{figure}

For the $\tau$-th delay tap, we denote the number of (clustered) channel paths as $L_{\tau}$, $1 \leq \tau \leq T$. As shown in Fig. 2, the elevation and azimuth AoDs for the $\ell$-th channel path over the $\tau$-th delay tap between the Tx and Rx are denoted by $\hat{\theta}_{\tau}^{\ell}$ and $\hat{\phi}_{\tau}^{\ell}$, respectively, $1 \leq \ell \leq L_{\tau}$. The elevation and azimuth AoAs for the $\ell$-th channel path over the $\tau$-th delay tap between the Tx and Rx are denoted by $\tilde{\theta}_{\tau}^{\ell}$ and $\tilde{\phi}_{\tau}^{\ell}$, respectively, $1 \leq \ell \leq L_{\tau}$. For convenience, the virtual AoDs and AoAs are defined as
\begin{subequations}\label{eq_AoD_AoA}
	\begin{align}
		&\begin{aligned}&\hat{\vartheta}_{\tau}^{\ell} = \cos \hat{\theta}_{\tau}^{\ell} \cos \hat{\phi}_{\tau}^{\ell},~
			\hat{\varphi}_{\tau}^{\ell} = \cos \hat{\theta}_{\tau}^{\ell} \sin \hat{\phi}_{\tau}^{\ell},~
			\hat{\omega}_{\tau}^{\ell} = \sin \hat{\theta}_{\tau}^{\ell}, \end{aligned} \label{eq_AoD}\\
		&\begin{aligned}&\tilde{\vartheta}_{\tau}^{\ell} = \cos \tilde{\theta}_{\tau}^{\ell} \cos \tilde{\phi}_{\tau}^{\ell},~
			\tilde{\varphi}_{\tau}^{\ell} = \cos \tilde{\theta}_{\tau}^{\ell} \sin \tilde{\phi}_{\tau}^{\ell},~
			\tilde{\omega}_{\tau}^{\ell} = \sin \tilde{\theta}_{\tau}^{\ell}, \end{aligned}  \label{eq_AoA}
	\end{align}
\end{subequations} 
for $1 \leq \ell \leq L_{\tau}$ and $1 \leq \tau \leq T$. Moreover, the wave vectors at the Tx and Rx are defined as $\hat{\mathbf{k}}_{\tau}^{\ell} = [\hat{\vartheta}_{\tau}^{\ell}, \hat{\varphi}_{\tau}^{\ell}, \hat{\omega}_{\tau}^{\ell}]^{\mathrm{T}}$ and $\tilde{\mathbf{k}}_{\tau}^{\ell} = [\tilde{\vartheta}_{\tau}^{\ell}, \tilde{\varphi}_{\tau}^{\ell}, \tilde{\omega}_{\tau}^{\ell}]^{\mathrm{T}}$, respectively. Then, the field-response vectors (FRVs) for the channel paths over the $\tau$-th (clustered) delay tap, $1 \leq \tau \leq T$, between the Tx and Rx are given by \cite{zhu2023MAMag,zhu2022MAmodel}
\begin{subequations}\label{eq_field_res}
	\begin{align}
		&\mathbf{g}_{\tau}(\mathbf{t}) = \left[e^{j\frac{2\pi}{\lambda} \mathbf{t}^{\mathrm{T}} \hat{\mathbf{k}}_{\tau}^{1}},
		e^{j\frac{2\pi}{\lambda} \mathbf{t}^{\mathrm{T}} \hat{\mathbf{k}}_{\tau}^{2}},
		\cdots, e^{j\frac{2\pi}{\lambda} \mathbf{t}^{\mathrm{T}} \hat{\mathbf{k}}_{\tau}^{L_{\tau}}}\right]^{\mathrm{T}}, \label{eq_field_resTx}\\
		&\mathbf{f}_{\tau}(\mathbf{r}) = \left[e^{j\frac{2\pi}{\lambda} \mathbf{r}^{\mathrm{T}} \tilde{\mathbf{k}}_{\tau}^{1}},
		e^{j\frac{2\pi}{\lambda} \mathbf{r}^{\mathrm{T}} \tilde{\mathbf{k}}_{\tau}^{2}},
		\cdots, e^{j\frac{2\pi}{\lambda} \mathbf{r}^{\mathrm{T}} \tilde{\mathbf{k}}_{\tau}^{L_{\tau}}}\right]^{\mathrm{T}}. \label{eq_field_resRx}
	\end{align}
\end{subequations}
In particular, $\mathbf{t}^{\mathrm{T}} \hat{\mathbf{k}}_{\tau}^{\ell}$, $1 \leq \ell \leq L_{\tau}$, characterizes the difference of the signal propagation distance for the $\ell$-th channel path between Tx-MA position $\mathbf{t}$ and the reference point of the Tx region. Similarly, $\mathbf{r}^{\mathrm{T}} \tilde{\mathbf{k}}_{\tau}^{\ell}$, $1 \leq \ell \leq L_{\tau}$, represents the difference of the signal propagation distance for the $\ell$-th channel path between Rx-MA position $\mathbf{r}$ and the reference point of the Rx region. Thus, the FRVs account for the phase changes of the complex coefficients for all channel paths under different positions of the Tx-MA and Rx-MA.

As such, the baseband equivalent (time-domain) CIR over the $\tau$-th delay tap between the Tx-MA located at position $\mathbf{t}$ and the Rx-MA located at position $\mathbf{r}$ can be represented as
\begin{equation}\label{eq_CIR}
	h_{\tau}(\mathbf{t},\mathbf{r}) = \mathbf{f}_{\tau}(\mathbf{r})^{\mathrm{H}} \mathbf{\Sigma}_{\tau} \mathbf{g}_{\tau}(\mathbf{t}),~1 \leq \tau \leq T,
\end{equation}
where $\mathbf{\Sigma}_{\tau} = \mathrm{diag}\{\mathbf{b}_{\tau}\} \in \mathbb{C}^{L_{\tau} \times L_{\tau}}$ represents the path-response matrix (PRM) and $\mathbf{b}_{\tau}=[b_{\tau}^{1}, b_{\tau}^{2}, \cdots, b_{\tau}^{L_{\tau}}]^{\mathrm{T}}$ includes the response coefficients of all $L_{\tau}$ channel paths from the reference point of the Tx region to the reference point of the Rx region. 
Let $\mathbf{h}(\mathbf{t},\mathbf{r})=[h_{1}(\mathbf{t},\mathbf{r}), \cdots, h_{T} (\mathbf{t},\mathbf{r}), \mathbf{0}_{M-T}]^{\mathrm{T}} \in \mathbb{C}^{M}$ denote the zero-padded baseband equivalent CIR vector. The channel frequency response (CFR) over all the subcarriers between the Tx-MA located at position $\mathbf{t}$ and the Rx-MA located at position $\mathbf{r}$ is thus given by \cite{goldsmith2005wireless}
\begin{equation}\label{eq_CFR}
	\mathbf{c}(\mathbf{t},\mathbf{r}) = \mathbf{D}_{M} \mathbf{h}(\mathbf{t},\mathbf{r}) \triangleq [c_{1}(\mathbf{t},\mathbf{r}), c_{2}(\mathbf{t},\mathbf{r}), \cdots, c_{M}(\mathbf{t},\mathbf{r})]^{\mathrm{T}},
\end{equation}
where $\mathbf{D}_{M}$ denotes the $M$-dimensional discrete Fourier transform (DFT) matrix.

As can be observed from \eqref{eq_CIR} and \eqref{eq_CFR}, the change of MAs' positions can yield different combinations of the entries in the PRM over each delay tap. Thus, by optimizing the positions of the Tx-MA and Rx-MA, the CIR and CFR of the considered MA-OFDM system can be significantly reconfigured. For example, if the Tx-MA and Rx-MA are deployed at proper positions such that the complex coefficients of channel paths over each delay tap are constructively superimposed, the average channel power gain over $M$ subcarriers can be improved. In contrast, if the Tx-MA and Rx-MA are placed at positions where the complex coefficients of channel paths over each delay tap are destructively superimposed, the average channel power decreases. In a word, the incorporation of additional DoFs in position optimization of the Tx-MA and Rx-MA facilitates enhancing the channel conditions and thereby improving communication performance of the considered MA-OFDM system.

\subsection{Problem Formulation}
Denote the transmit power allocation vector as $\mathbf{p}=[p_{1}, p_{2}, \cdots, p_{M}]^{\mathrm{T}} \in \mathbb{R}^{M}$, where $p_{m} \geq 0$ represents the power allocated to the $m$-th subcarrier, $1 \leq m \leq M$. Then, the achievable rate for the considered MA-OFDM system is given by \cite{goldsmith2005wireless}
\begin{equation}\label{eq_rate}
	R(\mathbf{t},\mathbf{r},\mathbf{p}) = \frac{1}{M+M_{\mathrm{CP}}} \sum \limits_{m=1}^{M} \log_{2} \left(1+\frac{|c_{m}(\mathbf{t},\mathbf{r})|^{2}p_{m}}{\sigma^{2}}\right),
\end{equation}
where $\sigma^{2}$ denotes the noise power for each subcarrier.

In this paper, we aim to maximize the achievable rate for the considered MA-OFDM system by exploiting the new DoF in antenna position optimization, which can be expressed as the following optimization problem:
\begin{subequations}\label{eq_problem}
	\begin{align}
		\mathop{\max}\limits_{\mathbf{t}, \mathbf{r}, \mathbf{p}}~~~
		&R(\mathbf{t},\mathbf{r},\mathbf{p}) \label{eq_problem_a}\\
		\mathrm{s.t.}~~~~ &p_{m} \geq 0,~1 \leq m \leq M, \label{eq_problem_b}\\
		&\sum \limits_{m=1}^{M} p_{m} \leq P, \label{eq_problem_c}\\
		&\mathbf{t} \in \mathcal{C}_{\mathrm{t}}, \label{eq_problem_d}\\
		&\mathbf{r} \in \mathcal{C}_{\mathrm{r}}, \label{eq_problem_e}
	\end{align}
\end{subequations}
where constraint \eqref{eq_problem_b} guarantees that the power allocated to each subcarrier is non-negative; constraint \eqref{eq_problem_c} ensures that the total transmit power does not exceed its maximum value $P$; and constraints \eqref{eq_problem_d} and \eqref{eq_problem_e} confine the Tx-MA and Rx-MA moving in their feasible regions, respectively. Since the achievable rate is highly non-linear w.r.t. variables $\mathbf{t}$ and $\mathbf{r}$, it is challenging to derive the optimal solution for problem \eqref{eq_problem}. In the following sections, we first analyze the asymptotic performance upper bound on the OFDM achievable rate. Then, the practical algorithms are developed to obtain suboptimal solutions for problem \eqref{eq_problem}.

\section{Performance Analysis}
To facilitate the performance analysis for MA-OFDM systems, we first characterize the property of virtual AoDs and AoAs (i.e., $\{\hat{\vartheta}_{\tau}^{\ell}\}$, $\{\hat{\varphi}_{\tau}^{\ell}\}$, $\{\hat{\omega}_{\tau}^{\ell}\}$, $\{\tilde{\vartheta}_{\tau}^{\ell}\}$, $\{\tilde{\varphi}_{\tau}^{\ell}\}$, and $\{\tilde{\omega}_{\tau}^{\ell}\}$) for the channel paths between the Tx and Rx because these parameters fundamentally determine the spatial diversity of wireless channels. Without loss of generality, we consider the virtual AoD vector, $\bm{\vartheta} \triangleq [\hat{\vartheta}_{1}^{1}, \cdots, \hat{\vartheta}_{1}^{L_{1}}, \cdots, \hat{\vartheta}_{T}^{1}, \cdots, \hat{\vartheta}_{T}^{L_{T}}]^{\mathrm{T}}$, of which the dimension is denoted by $N=\sum_{\tau=1}^{T} L_{\tau}$. In particular, we call that the elements in $\bm{\vartheta}$ are linearly independent over the rational number set $\mathbb{Q}$ if for any non-zero vector $[a_{1}^{1},\cdots,a_{1}^{L_{1}},\cdots,a_{T}^{1},\cdots,a_{T}^{L_{T}}]^{\mathrm{T}} \in \mathbb{Q}^{N}$, the inequation $\sum_{\tau=1}^{T} \sum_{\ell=1}^{L_{\tau}}  a_{\tau}^{\ell} \hat{\vartheta}_{\tau}^{\ell} \neq 0$
always holds, which is termed as the \emph{linearly independent angle (LIA)} condition\footnote{Note that the LIA condition and subsequent analysis are also applicable to all other virtual AoDs/AoAs, $\{\hat{\varphi}_{\tau}^{\ell}\}$, $\{\hat{\omega}_{\tau}^{\ell}\}$, $\{\tilde{\vartheta}_{\tau}^{\ell}\}$, $\{\tilde{\varphi}_{\tau}^{\ell}\}$, and $\{\tilde{\omega}_{\tau}^{\ell}\}$.}. 

In fact, if the virtual AoDs in $\bm{\vartheta}$ are linearly dependent over $\mathbb{Q}$, i.e., the LIA condition does not hold, then there always exists a period $X$ in distance which guarantees that $\mathbf{h}(\mathbf{t},\mathbf{r}) = \mathbf{h}(\mathbf{t}+[X,0,0]^{\mathrm{T}},\mathbf{r})$ holds for any $\mathbf{t}$ and $\mathbf{r}$ \cite{zhu2022MAmodel}. Such a periodic behavior decreases the spatial diversity of wireless channels because the CIRs have a high spatial correlation between any two periods. In contrast, if the virtual AoDs in $\bm{\vartheta}$ are linearly independent over $\mathbb{Q}$, i.e., the LIA condition holds, then there is no explicit period for the CIR and thus its maximal diversity can be achieved in the spatial domain.

Next, we will demonstrate that the virtual AoDs almost always satisfy the LIA condition in a probabilistic sense. In practice, due to the random locations of scatterers in the signal propagation environment, the virtual AoDs can be modeled as independent random variables within $[-1,1]$, which are denoted by a random vector, $\bm{\Theta} \triangleq [\Theta_{1}^{1}, \cdots, \Theta_{1}^{L_{1}}, \cdots, \Theta_{T}^{1}, \cdots, \Theta_{T}^{L_{T}}]^{\mathrm{T}}$. Denoting $f_{\bm{\Theta}}(\bm{\vartheta})=\prod_{\tau=1}^{T} \prod_{\ell=1}^{L_{\tau}} f_{\Theta_{\tau}^{\ell}}(\hat{\vartheta}_{\tau}^{\ell})$ as the joint probability density function (PDF) of random vector $\bm{\Theta}$, we have the following lemma to characterize the probability of the virtual AoDs satisfying the LIA condition.

\begin{lemma}\label{lemma_LIA}
	If PDFs $\{f_{\Theta_{\tau}^{\ell}}(\hat{\vartheta}_{\tau}^{\ell})\}$ are continuous and bounded, then we always have
	\begin{equation}
		\Pr\left\{\bm{\Theta} \in \mathcal{J}\right\} = \int_{\bm{\vartheta} \in \mathcal{J}} f_{\bm{\Theta}}\left(\bm{\vartheta}\right) \dif \bm{\vartheta} = 1,
	\end{equation}
	where $\mathcal{J}$ denotes the set of all vectors in $N$-dimensional interval $[-1,1]^{N}$ which satisfy the LIA condition.
\end{lemma}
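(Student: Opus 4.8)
The plan is to prove the complementary statement: the virtual AoD vector \emph{fails} the LIA condition only on a set of Lebesgue measure zero, which—because the joint density is bounded—carries probability zero. The achievable rate analysis never enters here; everything reduces to a measure-theoretic fact about the set of rationally dependent vectors, together with the hypothesis that $f_{\bm{\Theta}}$ is a bounded density.

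First I would describe the ``bad'' set explicitly. By the definition of the LIA condition, a point $\bm{\vartheta} \in [-1,1]^{N}$ lies in $[-1,1]^{N} \setminus \mathcal{J}$ precisely when there exists a \emph{nonzero} rational vector $\mathbf{a} = [a_{1}^{1}, \cdots, a_{T}^{L_{T}}]^{\mathrm{T}} \in \mathbb{Q}^{N}$ such that $\mathbf{a}^{\mathrm{T}} \bm{\vartheta} = 0$. Consequently I would write the failure set as a union of intersections with hyperplanes through the origin,
\[
[-1,1]^{N} \setminus \mathcal{J} = \bigcup_{\mathbf{a} \in \mathbb{Q}^{N} \setminus \{\mathbf{0}\}} H_{\mathbf{a}}, \quad H_{\mathbf{a}} \triangleq \left\{\bm{\vartheta} \in [-1,1]^{N} : \mathbf{a}^{\mathrm{T}} \bm{\vartheta} = 0\right\}.
\]
The two decisive structural observations are that $\mathbb{Q}^{N}$ is \emph{countable}, so this is a countable union, and that each $H_{\mathbf{a}}$ is contained in an $(N-1)$-dimensional affine subspace (a hyperplane with normal $\mathbf{a} \neq \mathbf{0}$), hence has $N$-dimensional Lebesgue measure zero. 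Since a countable union of null sets is null, I obtain $\mathcal{L}\left([-1,1]^{N} \setminus \mathcal{J}\right) = 0$.

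Finally I would translate this Lebesgue-null conclusion into a probability-zero conclusion using the boundedness hypothesis. Because $f_{\bm{\Theta}}$ is continuous and bounded, let $C \triangleq \sup_{\bm{\vartheta}} f_{\bm{\Theta}}(\bm{\vartheta}) < \infty$; then
\[
\Pr\left\{\bm{\Theta} \notin \mathcal{J}\right\} = \int_{[-1,1]^{N} \setminus \mathcal{J}} f_{\bm{\Theta}}(\bm{\vartheta}) \dif \bm{\vartheta} \leq C \, \mathcal{L}\!\left([-1,1]^{N} \setminus \mathcal{J}\right) = 0,
\]
so that $\Pr\left\{\bm{\Theta} \in \mathcal{J}\right\} = 1$, as claimed. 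Here the product form of the joint PDF plays no role; only its boundedness (equivalently, absolute continuity of the induced measure w.r.t.\ Lebesgue measure) is needed, and continuity serves merely to guarantee that $f_{\bm{\Theta}}$ is a genuine bounded integrand on the compact cube.

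I do not expect a serious obstacle: the measure-zero property of a hyperplane and the countable-union argument are standard, and the boundedness of the density makes the final bound immediate. The only point requiring care is the \emph{exact} identification of the failure set with the displayed countable union—i.e.\ confirming that negating the LIA condition is logically equivalent to the existence of a single nonzero rational annihilator $\mathbf{a}$—after which the remainder is routine measure theory.
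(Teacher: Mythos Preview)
Your argument is correct. Both your proof and the paper's establish that the LIA-failure set is Lebesgue-null and then invoke the bounded density, but the decompositions differ. You index the bad set by the countable family of nonzero rational vectors $\mathbf{a}\in\mathbb{Q}^{N}\setminus\{\mathbf{0}\}$ and observe that each $H_{\mathbf{a}}$ is a hyperplane, hence null in $\mathbb{R}^{N}$. The paper instead indexes by coordinate: for each $n$ it shows that, conditional on $\{\hat{\vartheta}_j\}_{j\neq n}$, the set of values of $\hat{\vartheta}_n$ that can produce a rational dependence is a \emph{countable} subset of $[-1,1]$ (rational combinations of finitely many fixed reals), hence one-dimensionally null; a union bound over $n$ together with Fubini and the product form of $f_{\bm{\Theta}}$ then finishes. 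Your route is slightly more economical—it treats the joint distribution directly and, as you note, never uses independence—while the paper's route makes the role of each coordinate explicit but relies on the factored density to carry out the conditioning step.
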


\begin{proof}
	See Appendix \ref{Appen_LIA}.
\end{proof}

Lemma \ref{lemma_LIA} indicates that the virtual AoDs satisfy the LIA condition with probability 1, which motivates us to focus on the case of $\bm{\vartheta}$ satisfying the LIA condition. Moreover, to reveal the ultimate performance limit of MA-aided wideband communication systems, we assume that the region size for antenna moving along axis $x_{\mathrm{t}}$ is arbitrarily large, which is termed as the \emph{arbitrarily large region (ALR)} assumption. 
The following theorem demonstrates the great potential of MA positioning for achieving the desired CIR with maximum channel gains yet arbitrary channel phases over all clustered delay taps. 

\begin{theorem}\label{Theo_CIR}
	Under the LIA condition and the ALR assumption, for any small positive number $\delta \ll 1$ and any real (phase) value $\nu_{\tau}$, $ 1\leq \tau \leq T$, there always exist $\mathbf{t}$ and $\mathbf{r}$ satisfying
	\begin{equation}
		\left|\|\mathbf{b}_{\tau}\|_{1} e^{j2\pi\nu_{\tau}} - h_{\tau}(\mathbf{t}, \mathbf{r})\right| \leq \delta, ~1\leq \tau \leq T.
	\end{equation}
\end{theorem}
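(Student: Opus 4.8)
The plan is to reduce the theorem to a single simultaneous phase-alignment problem and then to invoke a continuous version of the Kronecker--Weyl equidistribution theorem for linear flows on the torus. First I would specialize the antenna positions: fix the Rx-MA at the reference point $\mathbf{r}=\mathbf{0}$, so that $\mathbf{f}_{\tau}(\mathbf{r})$ is the all-ones vector, and let the Tx-MA move only along the $x$-axis, i.e. $\mathbf{t}=[x_{\mathrm{t}},0,0]^{\mathrm{T}}$ with $x_{\mathrm{t}}$ unbounded (permitted by the ALR assumption). Under this restriction \eqref{eq_CIR} collapses, using $\mathbf{t}^{\mathrm{T}}\hat{\mathbf{k}}_{\tau}^{\ell}=x_{\mathrm{t}}\hat{\vartheta}_{\tau}^{\ell}$, to
\[
h_{\tau}(\mathbf{t},\mathbf{r}) = \sum_{\ell=1}^{L_{\tau}} b_{\tau}^{\ell}\, e^{j\frac{2\pi}{\lambda} x_{\mathrm{t}} \hat{\vartheta}_{\tau}^{\ell}}.
\]
Since $|h_{\tau}|\le\sum_{\ell}|b_{\tau}^{\ell}|=\|\mathbf{b}_{\tau}\|_{1}$ by the triangle inequality, the value $\|\mathbf{b}_{\tau}\|_{1} e^{j2\pi\nu_{\tau}}$ is precisely the maximum-amplitude point with phase $\nu_{\tau}$, attained exactly when every summand shares the common phase $2\pi\nu_{\tau}$. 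Hence it suffices to exhibit a single $x_{\mathrm{t}}$ that drives all $N=\sum_{\tau}L_{\tau}$ phases simultaneously close to their prescribed targets.

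Next I would formalize this as simultaneous approximation on the $N$-torus. Writing $\alpha_{\tau}^{\ell}=\hat{\vartheta}_{\tau}^{\ell}/\lambda$ and target residues $\mu_{\tau}^{\ell}=\nu_{\tau}-\angle(b_{\tau}^{\ell})/(2\pi)$, I want $x_{\mathrm{t}}\alpha_{\tau}^{\ell}\equiv\mu_{\tau}^{\ell}\pmod 1$ to within a small tolerance $\epsilon$ for every $(\tau,\ell)$, since then $b_{\tau}^{\ell} e^{j2\pi x_{\mathrm{t}}\alpha_{\tau}^{\ell}}$ lies within $O(|b_{\tau}^{\ell}|\epsilon)$ of $|b_{\tau}^{\ell}| e^{j2\pi\nu_{\tau}}$. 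The key tool is that the linear flow $x\mapsto(x\alpha_{\tau}^{\ell}\bmod 1)_{(\tau,\ell)}$ has a dense (indeed equidistributed) orbit in $[0,1)^{N}$ if and only if the exponents $\{\alpha_{\tau}^{\ell}\}$ are linearly independent over $\mathbb{Q}$; scaling by the fixed nonzero $\lambda$ preserves $\mathbb{Q}$-linear independence, so this is exactly the LIA hypothesis on $\{\hat{\vartheta}_{\tau}^{\ell}\}$. The standard argument is that the orbit closure is a closed subgroup of $\mathbb{T}^{N}$; were it proper, it would lie in the kernel of some nontrivial character $e^{2\pi j\langle\mathbf{m},\cdot\rangle}$ with $\mathbf{m}\in\mathbb{Z}^{N}\setminus\{\mathbf 0\}$, forcing $\sum_{\tau,\ell}m_{\tau}^{\ell}\alpha_{\tau}^{\ell}=0$ and contradicting LIA. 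Density thus yields an $x_{\mathrm{t}}$ meeting all $N$ residue constraints within $\epsilon$, and by equidistribution over a sufficiently long interval---available under the ALR assumption---such $x_{\mathrm{t}}$ can be taken arbitrarily large.

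Finally I would close the estimate with a routine triangle-inequality bound: choosing $\epsilon$ small enough that $\sum_{\ell}|b_{\tau}^{\ell}|\,|e^{j2\pi(x_{\mathrm{t}}\alpha_{\tau}^{\ell}-\mu_{\tau}^{\ell})}-1|\le\delta$ for each $\tau$ (e.g. $\epsilon\le\delta/(2\pi\|\mathbf{b}_{\tau}\|_{1})$ uniformly in $\tau$, using $|e^{j\theta}-1|\le|\theta|$) gives $\bigl|\,\|\mathbf{b}_{\tau}\|_{1} e^{j2\pi\nu_{\tau}}-h_{\tau}(\mathbf{t},\mathbf{r})\bigr|\le\delta$ for all $\tau$ at once. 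The main obstacle is the density/equidistribution step: the entire purpose of the LIA condition is to render the $N$ phase offsets jointly controllable by the single scalar $x_{\mathrm{t}}$, so the crux is to cite (or prove) the continuous Kronecker--Weyl theorem in the exact form ``$\mathbb{Q}$-linear independence $\Rightarrow$ dense linear-flow orbit on $\mathbb{T}^{N}$,'' and to confirm that it is \emph{joint} (not merely pairwise) independence that LIA supplies. The remaining amplitude/phase bookkeeping is elementary.
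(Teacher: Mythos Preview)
Your proposal is correct and matches the paper's approach almost exactly: fix $\mathbf{r}=\mathbf{0}$, move $\mathbf{t}$ along the $x$-axis only, reduce to simultaneous approximation of the $N$ target residues $\mu_{\tau}^{\ell}=\nu_{\tau}-\angle b_{\tau}^{\ell}/(2\pi)$ on $\mathbb{T}^{N}$, invoke Kronecker--Weyl under the LIA hypothesis, and close with the triangle inequality and $|e^{j\theta}-1|\le|\theta|$. The only cosmetic difference is that the paper uses the \emph{discrete} equidistribution theorem (the sequence $\{k\mathbf{u}\bmod 1\}_{k\in\mathbb{N}}$ with $\mathbf{u}=\bm{\vartheta}/\lambda$, citing Granville) and takes $\mathbf{t}=[k,0,0]^{\mathrm{T}}$ for an integer $k$, whereas you use the continuous linear-flow version; both are equivalent here and your tolerance $\epsilon\le\delta/(2\pi\max_{\tau}\|\mathbf{b}_{\tau}\|_{1})$ is the natural per-tap refinement of the paper's global choice $\delta/\Delta$ with $\Delta=(2\pi)^{-1}\sum_{\tau}\|\mathbf{b}_{\tau}\|_{1}$.
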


\begin{proof}
	See Appendix \ref{Appen_CIR}.
\end{proof}

Theorem \ref{Theo_CIR} indicates that the optimization of MA positioning can not only maximize the channel gains over all delay taps but also alter their channel phases flexibly, which can be designed according to practical communication requirements and channel conditions. For example, in the low-SNR regime, the best transmission strategy for maximizing the OFDM achievable rate is to allocate all transmit power to the subcarrier with the highest channel gain \cite{goldsmith2005wireless}. In such a case, the phase of the CIR vector should be designed aligning with that of a selected column in the DFT matrix such that the channel gain over the corresponding subcarrier is maximized, e.g., setting $\nu_{\tau}=0$, $1 \leq \tau \leq T$, for maximizing the channel gain over the first subcarrier\footnote{In the context of orthogonal frequency division multiple access (OFDMA) systems, multiple users experience independent frequency-selective fading channels and transmit/receive over different subcarriers. As such, each individual user can optimize the antenna position to achieve the maximum channel gain over its assigned subcarrier(s).}.

Next, we consider the high-SNR regime and derive an upper bound on the achievable rate for the considered MA-OFDM communication system by the following theorem. 
\begin{theorem}\label{Theo_rate}
	Under the LIA condition and the ALR assumption, the MA-OFDM achievable rate in the high-SNR regime is upper-bounded by
	\begin{equation}\label{eq_rate_bound}
		R(\mathbf{t},\mathbf{r},\mathbf{p}) \leq \bar{R} = \frac{M}{M+M_{\mathrm{CP}}} \log_{2} \left(1+\frac{GP}{M\sigma^{2}}\right),
	\end{equation}
	with $G=\sum_{\tau=1}^{T} \|\mathbf{b}_{\tau}\|_{1}^{2}$.
\end{theorem}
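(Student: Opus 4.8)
The plan is to reduce the theorem to two elementary facts about the channel --- a per-tap amplitude bound and an energy-conservation identity --- and then to combine them with the concavity of the logarithm in the high-SNR regime. First I would bound each delay tap. Writing \eqref{eq_CIR} out as $h_\tau(\mathbf{t},\mathbf{r})=\sum_{\ell=1}^{L_\tau} b_\tau^\ell e^{j\frac{2\pi}{\lambda}(\mathbf{t}^{\mathrm{T}}\hat{\mathbf{k}}_\tau^\ell-\mathbf{r}^{\mathrm{T}}\tilde{\mathbf{k}}_\tau^\ell)}$, the triangle inequality gives $|h_\tau(\mathbf{t},\mathbf{r})|\le\sum_{\ell=1}^{L_\tau}|b_\tau^\ell|=\|\mathbf{b}_\tau\|_1$ for every feasible $\mathbf{t},\mathbf{r}$. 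Thus $\|\mathbf{b}_\tau\|_1$ is exactly the maximal tap magnitude --- the quantity whose attainability with arbitrary phase is the content of Theorem \ref{Theo_CIR} --- and here it plays the role of the upper limit.

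Next I would pass to the frequency domain. Since the (unnormalized) DFT matrix obeys $\mathbf{D}_M^{\mathrm{H}}\mathbf{D}_M=M\mathbf{I}_M$, applying Parseval's relation to \eqref{eq_CFR} yields $\sum_{m=1}^{M}|c_m(\mathbf{t},\mathbf{r})|^2=\mathbf{h}^{\mathrm{H}}\mathbf{D}_M^{\mathrm{H}}\mathbf{D}_M\mathbf{h}=M\,\|\mathbf{h}(\mathbf{t},\mathbf{r})\|_2^2=M\sum_{\tau=1}^{T}|h_\tau(\mathbf{t},\mathbf{r})|^2$, and combining with the per-tap bound produces the key energy inequality $\sum_{m=1}^{M}|c_m(\mathbf{t},\mathbf{r})|^2\le M\sum_{\tau=1}^{T}\|\mathbf{b}_\tau\|_1^2=MG$. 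With this in hand I would bound the rate \eqref{eq_rate}. Using the concavity of $\log_2(1+x)$ and Jensen's inequality across the $M$ subcarriers, $R\le\frac{M}{M+M_{\mathrm{CP}}}\log_2\!\big(1+\frac{1}{M\sigma^2}\sum_{m=1}^{M}|c_m|^2 p_m\big)$; in the high-SNR regime the rate-maximizing allocation approaches the uniform profile $p_m=P/M$, and substituting this together with the energy inequality and $\sum_m p_m\le P$ collapses the argument of the logarithm to $\frac{GP}{M\sigma^2}$, which is precisely $\bar{R}$.

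The step I expect to be the genuine obstacle is pinning down the power allocation, because the inequality $R\le\bar{R}$ is false at general SNR: concentrating power on the subcarrier of largest gain can make $\sum_m|c_m|^2 p_m$ exceed $GP$, so $\bar{R}$ dominates only once $P/\sigma^2$ is large. I would therefore have to justify that the rate-optimal water-filling allocation converges to the uniform profile $p_m=P/M$ as $P/\sigma^2\to\infty$ (the water level swamps every inverse channel gain $\sigma^2/|c_m|^2$), so that the Jensen bound instantiated at uniform power is asymptotically valid --- equivalently, that replacing $\log_2(1+\frac{|c_m|^2 p_m}{\sigma^2})$ by $\log_2\frac{|c_m|^2 p_m}{\sigma^2}$ incurs an additive gap that vanishes. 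Granting this, the single remaining inequality is the Jensen step over $\{|c_m|^2\}$ together with $\sum_m|c_m|^2\le MG$, which is routine; the bound is asymptotically tight --- for instance for a single dominant delay tap, where Theorem \ref{Theo_CIR} realizes the maximal magnitude $\|\mathbf{b}_1\|_1$ and the CFR is frequency-flat --- confirming that $\bar{R}$ is the exact high-SNR limit.
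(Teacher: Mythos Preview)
Your proposal is correct and the first half --- the per-tap triangle bound $|h_\tau|\le\|\mathbf{b}_\tau\|_1$ and the Parseval identity $\sum_m|c_m|^2=M\|\mathbf{h}\|_2^2\le MG$ --- coincides exactly with the paper's proof. The second half, however, takes a different route. The paper does not invoke Jensen's inequality; instead it introduces a relaxed problem in which the subcarrier gains $v_m=|c_m|^2$ are treated as free optimization variables subject only to $\sum_m v_m\le MG$, writes the water-filling solution for $\mathbf{p}$ in closed form under the high-SNR simplification $p_m^\star=\mu-\sigma^2/v_m^\star>0$, and then proves by a direct contradiction/averaging argument that any two unequal $v_{m_1}^\star\neq v_{m_2}^\star$ can be replaced by their common mean to strictly increase the relaxed rate. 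This forces $v_m^\star=G$ for all $m$, whence $p_m^\star=P/M$ and the bound follows.

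Your Jensen route is more elementary and reaches the same endpoint with less machinery, but the difficulty you flagged is real: the Jensen upper bound $\frac{M}{M+M_{\mathrm{CP}}}\log_2(1+\tfrac{1}{M\sigma^2}\sum_m|c_m|^2p_m)$, evaluated at the exact water-filling $\mathbf{p}^\star$, can exceed $\bar R$ by an $O(1)$ additive term because $\sum_m|c_m|^2p_m^\star=\mu\sum_m|c_m|^2-M\sigma^2$ and $(\sum_m|c_m|^2)(\sum_m|c_m|^{-2})\ge M^2$; you then absorb this into the high-SNR $o(1)$, which is fine but requires care. The paper's relaxed-problem argument sidesteps this by optimizing over $\mathbf{v}$ and $\mathbf{p}$ jointly, so that the equal-gain/equal-power optimum drops out without an explicit asymptotic estimate. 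In effect, the paper's contradiction step is a bespoke proof of Schur-concavity of the relaxed rate in $\mathbf{v}$, which is precisely what your Jensen inequality encodes; your argument is shorter, the paper's is slightly more self-contained at finite SNR.
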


\begin{proof}
	See Appendix \ref{Appen_rate}.
\end{proof}
As can be observed from the proof of Theorem \ref{Theo_rate}, the upper bound on the achievable rate in \eqref{eq_rate_bound} is attained through equal transmit power allocation and equal channel gain realization across all OFDM subcarriers. However, in practical systems, due to the limited DoF in channel phase optimization for $L < M$ delay taps, the channel gains over multiple subcarriers cannot be exactly identical. Consequently, a performance gap arises between the MA-OFDM achievable rate and its upper bound given in Theorem \ref{Theo_rate}. Nevertheless, we will demonstrate in Section V by simulation that this performance gap can be small when the size of regions for antenna movement is sufficiently large.

\section{Optimization Algorithms}
In this section, we develop the PGA algorithm to numerically solve problem \eqref{eq_problem} subject to finite-size Tx/Rx regions. Then, a simplified PGA algorithm is also provided to reduce the computational complexity.
\subsection{PGA Algorithm}
For any given positions of the Tx-MA and Rx-MA, $\mathbf{t}$ and $\mathbf{r}$, problem \eqref{eq_problem} is convex w.r.t. $\mathbf{p}$ and its optimal (water-filling)  solution is given by
\begin{equation}\label{eq_power_allo_tr}
	p_{m}^{\star}(\mathbf{t}, \mathbf{r}) = \max \left\{\mu(\mathbf{t}, \mathbf{r}) - \frac{\sigma^{2}}{|c_{m}(\mathbf{t},\mathbf{r})|^{2}}, 0\right\},~1 \leq m \leq M,
\end{equation}
where $\mu(\mathbf{t}, \mathbf{r})$ should be selected to guarantee $\sum_{m=1}^{M} p_{m}^{\star}(\mathbf{t}, \mathbf{r}) = P$ and it can be calculated by utilizing the bisection search given a required accuracy, $\epsilon_{p}$. Substituting \eqref{eq_power_allo_tr} into \eqref{eq_rate}, we can simplify the achievable rate as
\begin{equation}\label{eq_rate_power}
	R_{p}(\mathbf{t},\mathbf{r}) = \frac{1}{M+M_{\mathrm{CP}}} \sum \limits_{m=1}^{M} \log_{2} \left(1+\frac{|c_{m}(\mathbf{t},\mathbf{r})|^{2}p_{m}^{\star}(\mathbf{t}, \mathbf{r})}{\sigma^{2}}\right).
\end{equation}

Then, problem \eqref{eq_problem} is equivalently transformed into
\begin{equation}\label{eq_problem2}
		\mathop{\max}\limits_{\mathbf{t} \in \mathcal{C}_{\mathrm{t}}, \mathbf{r} \in \mathcal{C}_{\mathrm{r}}}~R_{p}(\mathbf{t},\mathbf{r}).
\end{equation}
Since $R_{p}(\mathbf{t},\mathbf{r})$ is highly non-linear w.r.t. $\mathbf{t}$ and $\mathbf{r}$, there may exist a large number of local maxima in their feasible region. Conventional optimization methods may thus be trapped in locally optimal solutions of problem \eqref{eq_problem2}. To address this issue, we propose the PGA algorithm to efficiently search multiple maximum points of $(\mathbf{t}, \mathbf{r})$ and select the one yielding the highest achievable rate.

Specifically, we define $K_{\max} (\geq 1)$ as the maximum number of points for parallel search. We initialize $K^{(0)} = K_{\max}$ MA positioning solutions for the Tx-MA and Rx-MA as the candidate set $\bar{\mathcal{S}}_{0}=\{(\mathbf{t}_{k}^{(0)}, \mathbf{r}_{k}^{(0)})\}_{1 \leq k \leq K^{(0)}}$. Then, for the $i$-th iteration, $1 \leq i \leq I_{\max}$, we set the searching line segment for the $k$-th MA positioning vector, $1 \leq k \leq K^{(i-1)}$, as
\begin{equation}\label{eq_line_search}
	\left[\begin{aligned}
		&\mathbf{t}\\
		&\mathbf{r}
	\end{aligned}\right]
	=\left[\begin{aligned}
		&\mathbf{t}_{k}^{(i-1)}\\
		&\mathbf{r}_{k}^{(i-1)}
	\end{aligned}\right]
	+ \eta_{k} \times \left[\begin{aligned}
		&\nabla_{\mathbf{t}} R_{p}(\mathbf{t}_{k}^{(i-1)},\mathbf{r}_{k}^{(i-1)})\\
		&\nabla_{\mathbf{r}} R_{p}(\mathbf{t}_{k}^{(i-1)},\mathbf{r}_{k}^{(i-1)})
	\end{aligned}\right],
\end{equation}
where $0 < \eta_{k} \leq \eta_{k}^{\max}$ guarantees the search over the gradient ascent direction and $\eta_{k}^{\max}$ denotes the maximum value of $\eta_{k}$ which yields the MA positioning vector on the boundary of its feasible region. Note that two special cases should be considered for the searching line segment in \eqref{eq_line_search}. On one hand, if the gradient in \eqref{eq_line_search} is a zero vector, $R_{p}(\mathbf{t}_{k}^{(i-1)}, \mathbf{r}_{k}^{(i-1)})$ yields a local maximum w.r.t. $\mathbf{t}$ and $\mathbf{r}$. On the other hand, $\eta_{k}^{\max}=0$ indicates that $(\mathbf{t}_{k}^{(i-1)}, \mathbf{r}_{k}^{(i-1)})$ is located at the boundary of its feasible region. For both cases, the searching line segment in \eqref{eq_line_search} is degraded into an empty set and should not be used for line search. 

Due to the highly non-linearity of $R_{p}(\mathbf{t},\mathbf{r})$ w.r.t. $\mathbf{t}$ and $\mathbf{r}$, multiple local maximum points w.r.t. $\eta_{k}$ may exist over the searching line segment in \eqref{eq_line_search}. To calculate the local maximum points w.r.t. $\eta_{k}$ for each searching line segment, we may gradually increase $\eta_{k}$ from $0$ by a small positive step size $\zeta$, i.e., $\eta_{k}^{q} = q\zeta$, $q \geq 1, q \in \mathbb{Z}$, until $\eta_{k}^{q}$ achieves its maximum value $\eta_{k}^{\max}$. Denote the achievable rate function w.r.t. $\eta_{k}$ over the $k$-th line segment as
\begin{equation}\label{eq_rate_eta}
	R_{i-1}^{(k)}(\eta_{k}) \triangleq R_{p} \left( \tilde{\mathbf{t}}_{k}^{(i-1)}(\eta_{k}), \tilde{\mathbf{r}}_{k}^{(i-1)}(\eta_{k}) \right),
\end{equation}
with $\tilde{\mathbf{t}}_{k}^{(i-1)}(\eta_{k}) \triangleq \mathbf{t}_{k}^{(i-1)} + \eta_{k} \nabla_{\mathbf{t}} R_{p}(\mathbf{t}_{k}^{(i-1)},\mathbf{r}_{k}^{(i-1)})$ and $\tilde{\mathbf{r}}_{k}^{(i-1)}(\eta_{k}) \triangleq \mathbf{r}_{k}^{(i-1)}+ \eta_{k} \nabla_{\mathbf{r}} R_{p}(\mathbf{t}_{k}^{(i-1)},\mathbf{r}_{k}^{(i-1)})$. The local maximum points of $\eta_{k}$ are thus defined as the ones satisfying $R_{i-1}^{(k)}(\eta_{k}^{q}) > R_{i-1}^{(k)}(\eta_{k}^{q-1})$ and $R_{i-1}^{(k)}(\eta_{k}^{q}) \geq R_{i-1}^{(k)}(\eta_{k}^{q+1})$.

We collect all local maximum points w.r.t. $\eta_{k}$ over all $K^{(i-1)}$ searching line segments and denote the set of all corresponding MA positioning vectors as $\mathcal{S}_{i}$, with $N_{i}=|\mathcal{S}_{i}|$. To balance the computational complexity and achievable-rate performance, we employ a greedy scheme by selecting the largest $K^{(i)} = \min\{N_{i}, K_{\max}\}$ maxima from $\mathcal{S}_{i}$ for the subsequent iteration, where the shortlisted set of candidate MA positioning vectors are denoted by $\bar{\mathcal{S}}_{i}=\{(\mathbf{t}_{k}^{(i)}, \mathbf{r}_{k}^{(i)})\}_{1 \leq k \leq K^{(i)}}$. 
For the $i$-th iteration, the optimal solution for the MA positioning vector is updated by 
\begin{equation}\label{eq_rate_power_sel}
	(\mathbf{t}^{(i)}, \mathbf{r}^{(i)})=\mathop{\arg~\max}\limits_{(\mathbf{t},\mathbf{r}) \in \bar{\mathcal{S}}}~ R_{p}(\mathbf{t},\mathbf{r}),
\end{equation}
where $\bar{\mathcal{S}} = \bigcup_{j=0}^{i} \bar{\mathcal{S}}_{j}$ is the union set of all candidate MA positioning vectors during the iterations.

\begin{algorithm}[t]
	\caption{PGA algorithm for solving problem \eqref{eq_problem}.}
	\label{alg_PGA}
	\begin{algorithmic}[1]
		\REQUIRE ~$M$, $M_{\mathrm{CP}}$, $P$, $\sigma^{2}$, $\sigma^{2}$, $\lambda$, $\mathcal{C}_{\mathrm{t}}$, $\mathcal{C}_{\mathrm{r}}$, $T$, $\{L_{\tau}\}$, $\{\mathbf{\Sigma}_{\tau}\}$, $\{\hat{\theta}_{\tau}^{\ell}\}$, $\{\hat{\phi}_{\tau}^{\ell}\}$, $\{\tilde{\theta}_{\tau}^{\ell}\}$, $\{\tilde{\phi}_{\tau}^{\ell}\}$, $K_{\max}$, $\zeta$, $I_{\max}$, $\epsilon_{p}$.
		\ENSURE ~$\mathbf{t}^{\star}$, $\mathbf{r}^{\star}$, $\mathbf{p}^{\star}$. \\
		\STATE Initialize an empty set of local maximum points, $\bar{\mathcal{S}}$.
		\STATE Initialize $K^{(0)} = K_{\max}$.
		\STATE Initialize candidate set $\bar{\mathcal{S}}_{0}=\{(\mathbf{t}_{k}^{(0)}, \mathbf{r}_{k}^{(0)})\}_{1 \leq k \leq K^{(0)}}$.
		\STATE Update $\bar{\mathcal{S}} \leftarrow \bar{\mathcal{S}} \cup \bar{\mathcal{S}}_{0}$.
		\STATE Update $(\mathbf{t}^{(0)}, \mathbf{r}^{(0)})$ according to \eqref{eq_rate_power_sel}.
		\FOR   {$i=1:1:I_{\max}$}
			\STATE Initialize an empty set $\mathcal{S}_{i}$.
			\FOR   {$k=1:1:K^{(i-1)}$}
				\STATE Calculate gradient $\nabla_{\mathbf{t}} R_{p}(\mathbf{t}_{k}^{(i-1)},\mathbf{r}_{k}^{(i-1)})$ and $\nabla_{\mathbf{r}} R_{p}(\mathbf{t}_{k}^{(i-1)},\mathbf{r}_{k}^{(i-1)})$ in \eqref{eq_line_search}.
				\STATE Calculate achievable rate w.r.t. $\eta_{k}$ according to \eqref{eq_rate_eta}.
				\IF {$\forall \eta_{k} \in (0, \eta_{k}^{\max}]$ is a local maximum point}
					\STATE Update $\mathcal{S}_{i} \leftarrow \mathcal{S}_{i} \cup \left\{(\tilde{\mathbf{t}}_{k}^{(i-1)}(\eta_{k}), \tilde{\mathbf{r}}_{k}^{(i-1)}(\eta_{k})) \right\}$. 
				\ENDIF
			\ENDFOR 
			\IF {$\mathcal{S}_{i}$ is empty}
				\STATE Set $(\mathbf{t}^{(i)}, \mathbf{r}^{(i)}) = (\mathbf{t}^{(i-1)}, \mathbf{r}^{(i-1)})$ .
				\STATE Break.
			\ENDIF
			\STATE Update $K^{(i)} = \min\{N_{i}, K_{\max}\}$ with $N_{i}=|\mathcal{S}_{i}|$.
			\STATE Update candidate set $\bar{\mathcal{S}}_{i}=\{(\mathbf{t}_{k}^{(i)}, \mathbf{r}_{k}^{(i)})\}_{1 \leq k \leq K^{(i)}} \subseteq \mathcal{S}_{i}$ yielding the $K^{(i)}$ highest achievable rate.
			\STATE Update $\bar{\mathcal{S}} \leftarrow \bar{\mathcal{S}} \cup \bar{\mathcal{S}}_{i}$.
			\STATE Update $(\mathbf{t}^{(i)}, \mathbf{r}^{(i)})$ according to \eqref{eq_rate_power_sel}.
		\ENDFOR
		\STATE Set MA positioning vectors as $\mathbf{t}^{\star}=\mathbf{t}^{(i)}$ and $\mathbf{r}^{\star}=\mathbf{r}^{(i)}$.
		\STATE Calculate the corresponding $\mathbf{p}^{\star}$ according to \eqref{eq_power_allo_tr}.
		\RETURN $\mathbf{t}^{\star}$, $\mathbf{r}^{\star}$, $\mathbf{p}^{\star}$.
	\end{algorithmic}
\end{algorithm}

The proposed PGA algorithm for solving problem \eqref{eq_problem} is summarized in Algorithm \ref{alg_PGA}. Specifically, for each iteration, we calculate the local maximum points w.r.t. $\eta_{k}$ of all searching segments and collect the corresponding MA positioning vectors in lines 8-14. The shortlisted local maximum points w.r.t. $\eta_{k}$ and the optimal MA positioning vector for each iteration are updated in lines 19-22. The iterations are repeated until the maximum iteration index, $I_{\max}$, is achieved or $\mathcal{S}_{i}$ is an empty set. Since the set of all candidate MA positioning vectors is accumulatively enlarged during the iterations shown in line 21, the achievable rate $R_{p}(\mathbf{t}^{(i)}, \mathbf{r}^{(i)})$ is non-decreasing with the iteration index. Given the fact that $R_{p}(\mathbf{t},\mathbf{r})$ is bounded from its definition, the convergence of Algorithm \ref{alg_PGA} is guaranteed. Note that if the maximum number of candidate MA positioning vector, $K_{\max}$, is set to a large value, the proposed PGA algorithm can find more local maxima for problem \eqref{eq_problem}, which can approach the globally optimal solution with a high possibility. However, large $K_{\max}$ entails high computational complexity because the space for line search increases for each iteration. Thus, a trade-off exists between the achievable-rate performance and the computational complexity by choosing different values of $K_{\max}$. In fact, for the special case of $K_{\max}=1$, we have $K^{(i)} = \min\{N_{i}, K_{\max}\}=1$ during the iterations, and thus the proposed PGA algorithm degrades into the classical steepest gradient ascent method \cite{boyd2004convex}.

The computational complexity of Algorithm \ref{alg_PGA} is analyzed as follows. For any given MA positioning vector, the computational complexity for calculating the CFR vector $\mathbf{c}(\mathbf{t},\mathbf{r})$ in \eqref{eq_CFR} is $\mathcal{O}(M)$. The complexity for calculating power allocation $p_{m}^{\star}(\mathbf{t}, \mathbf{r})$ in \eqref{eq_power_allo_tr} based on bisection search is $\mathcal{O}(\log_{2} \frac{1}{\epsilon_{p}})$, where $\epsilon_{p}$ is the search accuracy. For each iteration, the total number of points over all searching line segments is no larger than $K_{\max} A/\zeta$, where $A= \max\{\sqrt[3]{|\mathcal{C}_{\mathrm{t}}|}, \sqrt[3]{|\mathcal{C}_{\mathrm{r}}|}\}$ is the maximum size of the Tx/Rx region and $\zeta$ is the step size for line search. Thus, the total computational complexity of Algorithm \ref{alg_PGA} is given by $\mathcal{O}\left(\frac{I_{\max} K_{\max} A}{\zeta} \left(M + \log_{2} \frac{1}{\epsilon_{p}}\right) \right)$.

\subsection{Simplified PGA Algorithm}
The proposed Algorithm \ref{alg_PGA} requires to calculate the optimal power allocation for any given MA positioning vector by using the bisection search method, which entails a high computational complexity for line search over all candidate segments in \eqref{eq_line_search}. To further reduce the computational complexity, we propose a simplified PGA solution for problem \eqref{eq_problem} in this subsection. 

Recall the proof of Theorem \ref{Theo_rate}, where the average channel power gain over all $M$ subcarriers is equal to the CIR vector's power, $\left\|\mathbf{h}(\mathbf{t},\mathbf{r})\right\|_{2}^{2}$. Thus, the maximization of $\left\|\mathbf{h}(\mathbf{t},\mathbf{r})\right\|_{2}^{2}$ can help increase the achievable rate of the considered MA-OFDM system, which simplifies the calculation of power allocation over the iterations. Thus, problem \eqref{eq_problem2} can be simplified as
\begin{equation}\label{eq_problem3}
	\mathop{\max}\limits_{\mathbf{t} \in \mathcal{C}_{\mathrm{t}}, \mathbf{r} \in \mathcal{C}_{\mathrm{r}}}~\left\|\mathbf{h}(\mathbf{t},\mathbf{r})\right\|_{2}^{2}.
\end{equation}
The proposed PGA algorithm can be also applied to solve problem \eqref{eq_problem3}, whereas the calculations of the gradient in line 9 and the objective function in line 10 can be significantly simplified without involving the power allocation. In particular, the gradient of $\left\|\mathbf{h}(\mathbf{t},\mathbf{r})\right\|_{2}^{2}$ w.r.t.  $(\mathbf{t},\mathbf{r})$ can be derived in closed form as follows:
\begin{equation}\label{eq_gradient}
	\begin{aligned}
		&\nabla_{\mathbf{t}} \left\|\mathbf{h}(\mathbf{t},\mathbf{r})\right\|_{2}^{2} = \sum \limits_{\tau=1}^{T} \sum \limits_{\ell=1}^{L_{\tau}} \sum \limits_{\tau'=1}^{T} \sum \limits_{\ell'=1}^{L_{\tau'}} -\frac{2\pi}{\lambda}|b_{\tau}^{\ell}| |b_{\tau'}^{\ell'}| (\hat{\mathbf{k}}_{\tau}^{\ell}-\hat{\mathbf{k}}_{\tau'}^{\ell'}) \times \\
		& \sin\left[\frac{2\pi}{\lambda} \mathbf{t}^{\mathrm{T}} (\hat{\mathbf{k}}_{\tau}^{\ell}-\hat{\mathbf{k}}_{\tau'}^{\ell'}) - \frac{2\pi}{\lambda} \mathbf{r}^{\mathrm{T}} (\tilde{\mathbf{k}}_{\tau}^{\ell}-\tilde{\mathbf{k}}_{\tau'}^{\ell'})+(\angle b_{\tau}^{\ell} - \angle b_{\tau'}^{\ell'})\right],\\
		&\nabla_{\mathbf{r}} \left\|\mathbf{h}(\mathbf{t},\mathbf{r})\right\|_{2}^{2} = \sum \limits_{\tau=1}^{T} \sum \limits_{\ell=1}^{L_{\tau}} \sum \limits_{\tau'=1}^{T} \sum \limits_{\ell'=1}^{L_{\tau'}} \frac{2\pi}{\lambda}|b_{\tau}^{\ell}| |b_{\tau'}^{\ell'}| (\tilde{\mathbf{k}}_{\tau}^{\ell}-\tilde{\mathbf{k}}_{\tau'}^{\ell'}) \times \\
		& \sin\left[\frac{2\pi}{\lambda} \mathbf{t}^{\mathrm{T}} (\hat{\mathbf{k}}_{\tau}^{\ell}-\hat{\mathbf{k}}_{\tau'}^{\ell'}) - \frac{2\pi}{\lambda} \mathbf{r}^{\mathrm{T}} (\tilde{\mathbf{k}}_{\tau}^{\ell}-\tilde{\mathbf{k}}_{\tau'}^{\ell'})+(\angle b_{\tau}^{\ell} - \angle b_{\tau'}^{\ell'})\right].
	\end{aligned}
\end{equation}
After obtaining the solution for MA positioning vectors, the power allocation is finally calculated according to \eqref{eq_power_allo_tr}. As such, the computational complexity of the simplified PGA algorithm is given by $\mathcal{O}\left(\frac{I_{\max} K_{\max} A}{\zeta} + M + \log_{2} \frac{1}{\epsilon_{p}} \right)$, which is much lower than that of the PGA algorithm.

\section{Simulation Results}
In this section, we present simulation results to verify the analytical results and optimization algorithms for the proposed MA-OFDM wideband communication system. The simulation setup is first illustrated and then the numerical results are presented.

\subsection{Simulation Setup and Benchmark Schemes}
In the simulation, the carrier frequency and bandwidth are set as $f_{c}=2.4$ GHz and $B=40$ MHz, respectively. The number of OFDM subcarriers and the CP length are set as $M=64$ and $M_{\mathrm{CP}}=6$, respectively. The maximum transmit power is set as $P=1$ Watt (W), i.e., 30 dBm. The noise power of each subcarrier is given by $\sigma^{2}=\frac{1}{M}10^{\frac{BN_{0}}{10}-3}$ W, where $N_{0}=-174$ dBm/Hz represents the power spectral density of noise. The Tx and Rx regions for antenna moving are both set as cubes of size $A=4\lambda$, i.e., $\mathcal{C}_{\mathrm{t}}=\mathcal{C}_{\mathrm{r}}=[-2\lambda, 2\lambda] \times [-2\lambda, 2\lambda] \times [-2\lambda, 2\lambda]$. The number of non-zero delay taps and the number of channel paths per tap are set as $T=6$ and $L=5$, respectively \cite{zheng2020IRSOFDM}. The power delay profile follows the exponential decay \cite {riihonen2010generalized} with $q_{\tau}=\frac{1}{\xi} e^{-\alpha(\tau-1)}$, $1 \leq \tau \leq T$, where $\xi = \sum_{\tau=1}^{T} e^{-\alpha(\tau-1)}$ is the normalization factor and $\alpha$ is the exponential decay factor set as 2. For each tap, the coefficients of multiple channel paths are modeled as independent and identically distributed (i.i.d.) complex Gaussian random variables following $b_{\tau}^{\ell} \sim \mathcal{CN}(0, g_{0}q_{\tau}/L)$, $1 \leq \ell \leq L$, $1 \leq \tau \leq T$, where $g_{0}$ denotes the large-scale average channel power gain. As such, the average SNR at the Rx is given by $\mathrm{SNR}=\frac{g_{0}P}{M\sigma^{2}}$, which is set as 25 dB. For each channel path, the AoDs and AoAs follow the joint uniform distribution with the PDF given by $f_{\mathrm{AoD/AoA}}(\theta, \phi)= \frac{\cos \theta}{2\pi}$ \cite{zhu2022MAmodel}. For Algorithm \ref{alg_PGA}, the maximum number of iterations is set as $I_{\max}=100$. The maximum number of candidate solutions for parallel search is set as $K_{\max}=10$. The accuracy for power allocation in \eqref{eq_power_allo_tr} is set as $\epsilon_{p}=10^{-6}$. In the simulation figures, each point is averaged over $10^4$ random channel realizations.

The upper bound on the achievable rate is defined in \eqref{eq_rate_bound}. For the FPA benchmark scheme, the Tx and Rx antennas are fixed at the reference point of their corresponding regions. For the antenna selection (AS) benchmark scheme, three FPAs are deployed at both the Tx and Rx with half-wavelength spacing. For each channel realization, the Tx-Rx antenna pair which achieves the maximum achievable rate is selected via exhaustive search. For both FPA and AS schemes, the water-filling power allocation in \eqref{eq_power_allo_tr} is utilized for maximizing the achievable rate.

\subsection{Numerical Results}
\begin{figure}[t]
	\centering
	\includegraphics[width=\figwidth cm]{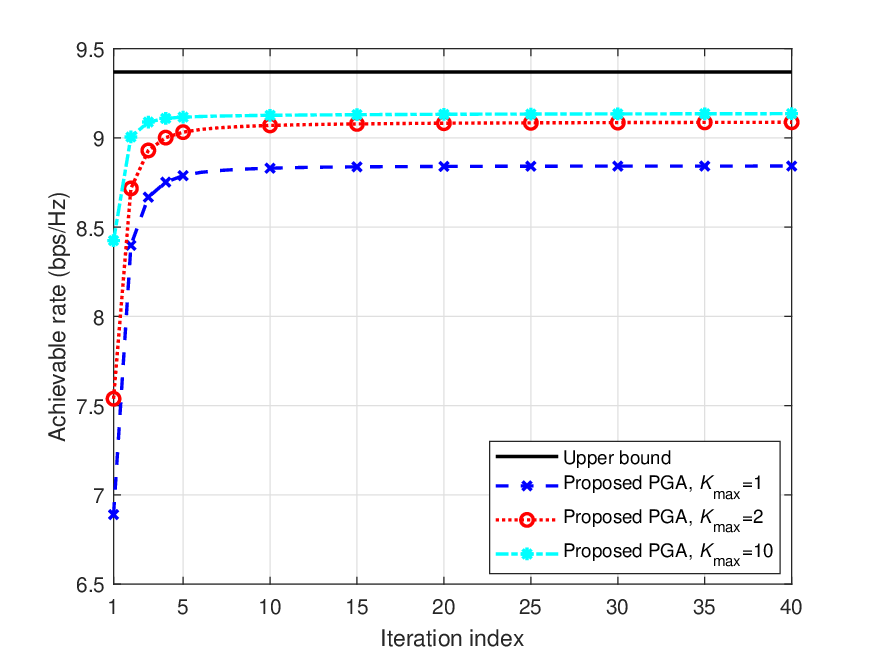}
	\caption{Evaluation of the convergence of the proposed PGA algorithm under different $K_{\max}$.}
	\label{fig:iteration_rate}
\end{figure}
\begin{figure}[t]
	\centering
	\includegraphics[width=\figwidth cm]{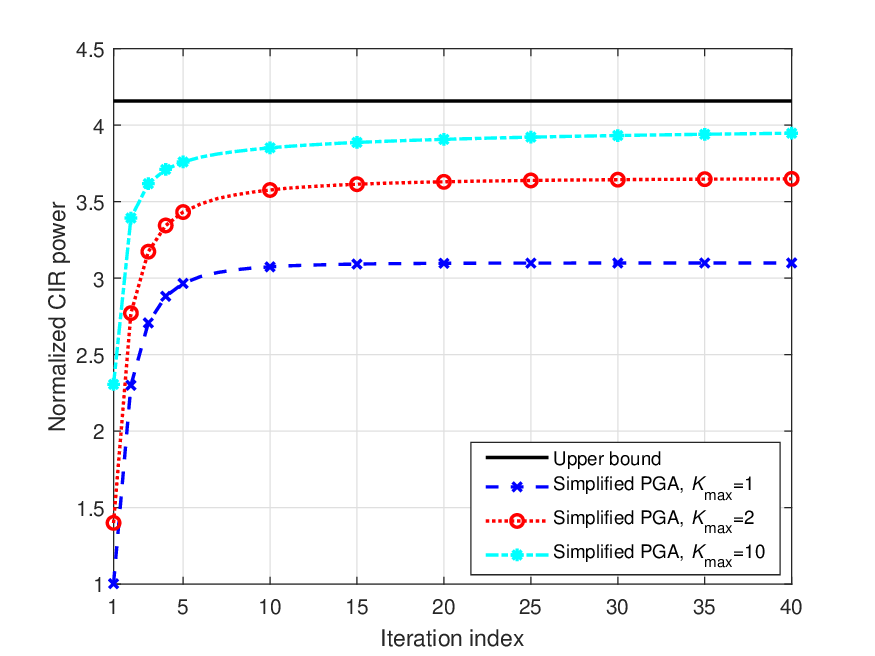}
	\caption{Evaluation of the convergence of the proposed simplified PGA algorithm under different $K_{\max}$.}
	\label{fig:iteration_CIR}
\end{figure}

First, we evaluate the convergence performance of the proposed PGA algorithm in Fig. \ref{fig:iteration_rate}. As can be observed, the achievable rate increases rapidly with the iteration index and converges after 10 iterations, which validates the convergence of Algorithm \ref{alg_PGA}. Besides, for the proposed PGA algorithm, the maximum number  of candidate solutions, $K_{\max}$, for parallel search significantly impacts the performance of the solution. For example, if $K_{\max}=1$, the proposed PGA algorithm degrades into the steepest ascent method and thus it becomes highly likely to obtain a locally optimal solution for problem \eqref{eq_problem}. In comparison, as $K_{\max}$ increases, the proposed PGA algorithm can simultaneously search over multiple line segments during the iterations and thus can obtain the globally optimal solution with a higher possibility. It is worth noting that for sufficiently large $K_{\max}$ (e.g., 10), the proposed solution closely approaches the performance upper bound on the OFDM achievable rate although it is derived under the assumption of infinite size of the Tx/Rx region. The results in Fig. \ref{fig:iteration_rate} demonstrate the efficacy of the proposed PGA algorithm for obtaining a near-optimal MA positioning vector in the given Tx/Rx region with a finite size. 

Next, we evaluate the convergence performance of the simplified PGA algorithm in Fig. \ref{fig:iteration_CIR}. In particular, the power of CIR in \eqref{eq_problem3} is normalized by the large-scale channel power gain, i.e., $\left\|\mathbf{h}(\mathbf{t},\mathbf{r})\right\|_{2}^{2} / g_{0}$. The corresponding upper bound on the normalized CIR power is thus given by $\sum_{\tau=1}^{T} \|\mathbf{b}_{\tau}\|_{1}^{2}/g_{0}$ in \eqref{eq_CIR_bound}. It can be observed from Fig. \ref{fig:iteration_CIR} that the simplified PGA algorithm for maximizing the CIR power converges after 20 iterations. The achieved CIR power increases with the maximum number of candidate solutions, $K_{\max}$, for parallel search. Moreover, for sufficiently large $K_{\max}$, the CIR power under the obtained MA positioning vector closely approaches the performance upper bound.

\begin{figure}[t]
	\centering
	\subfigure[Achievable rate in bps/Hz]{\includegraphics[width=\figwidth cm]{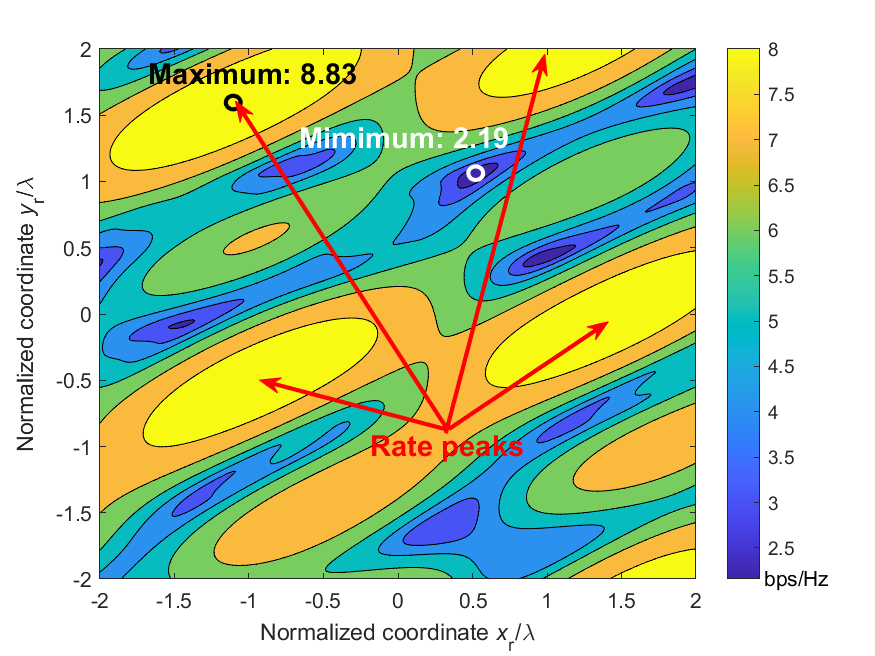} \label{fig:demo_rate}}
	\subfigure[Normalized CIR power]{\includegraphics[width=\figwidth cm]{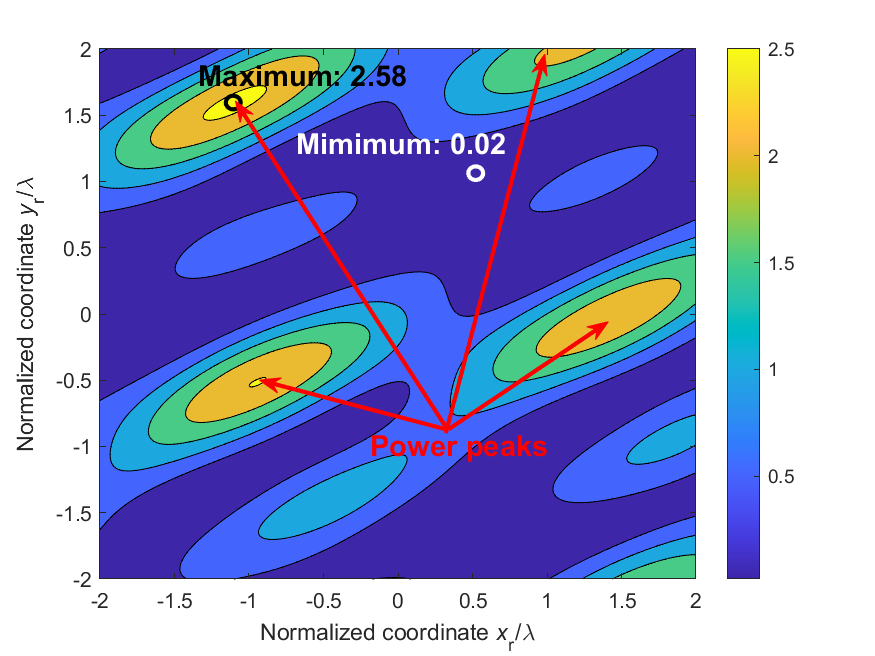} \label{fig:demo_channel}}
	\caption{Demonstration of the variation of the achievable rate and the normalized CIR power within the Rx region.}
	\label{fig:demo}
\end{figure}

To shed more light on the impact of MA position optimization, we demonstrate in Fig. \ref{fig:demo} the variation of the achievable rate and the normalized CIR power within the Rx region, where the Tx antenna is fixed at the origin of its local coordinate system, i.e., $\mathbf{t}=[0,0,0]^{\mathrm{T}}$ and $\mathbf{r}=[x_{\mathrm{r}},y_{\mathrm{r}},0]^{\mathrm{T}}$. As can be observed, the maps of achievable rate and CIR power exhibit high correlation within the Rx region. There are four peaks of the achievable rate in Fig. \ref{fig:demo_rate}, which are achieved at approximately the same positions with those of the CIR power in Fig. \ref{fig:demo_channel}. This alignment validates that the simplified PGA algorithm based on CIR power maximization can effectively maximize the achievable rate of the considered MA-OFDM communication system. Moreover, the achievable rate can increase from the minimum value 2.19 bps/Hz to the maximum value 8.83 bps/Hz between two locations with a distance of no larger than $2\lambda$. Similarly, the normalized CIR power can increase from 0.02 to 2.58, which yields over 20 dB gain in the channel power. The results in Fig. \ref{fig:demo} confirm the efficacy of the MA-OFDM communication system via antenna position optimization, even within small regions.

\begin{figure}[t]
	\centering
	\includegraphics[width=\figwidth cm]{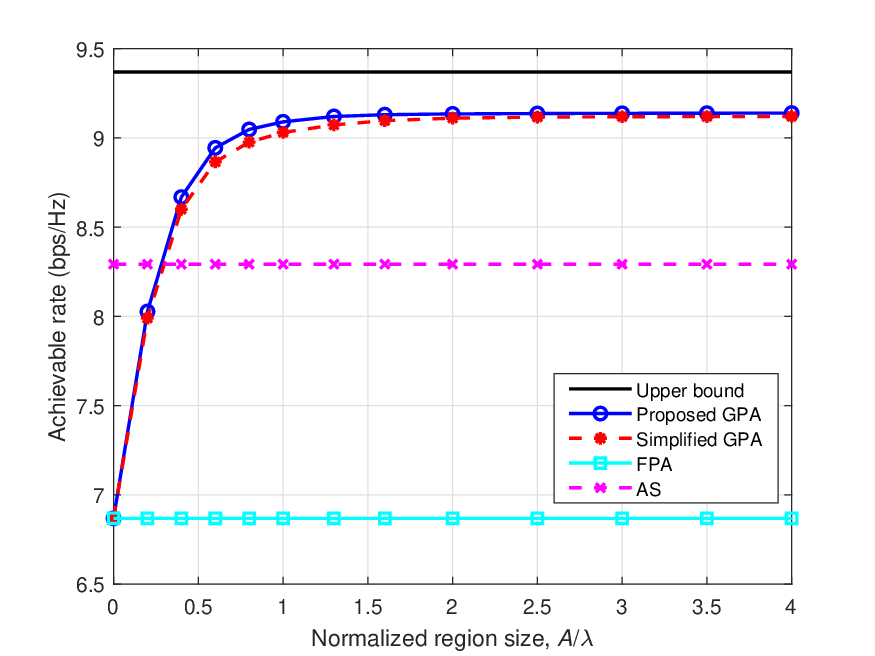}
	\caption{Achievable rates of the proposed and benchmark schemes versus the normalized region sizes for moving antennas at the Tx/Rx.}
	\label{fig:region}
\end{figure}

In Fig. \ref{fig:region}, we show the achievable rates of the proposed solutions for MA-OFDM systems versus the region sizes for moving antennas at the Tx/Rx and compare them with benchmark schemes. Both the proposed PGA and simplified PGA algorithms can reap an increasing achievable rate as the region size becomes larger. This is due to more DoFs in optimizing the antennas' positions with larger Tx/Rx region sizes. Moreover, the performance gap between the proposed and simplified PGA algorithms is small, especially for a large region size. It demonstrates the effectiveness of the simplified algorithm by maximizing the CIR power. In addition, the achievable rates of the proposed MA-OFDM systems are higher than those of both the conventional FPA and AS systems and can approach the performance upper bound when the size of the Tx/Rx region is larger than $\lambda$. This indicates that the achievable rate performance of MA-OFDM systems can be significantly improved by moving the Tx-MA/Rx-MA within a small local region with the size in the order of wavelength.

\begin{figure}[t]
	\centering
	\includegraphics[width=\figwidth cm]{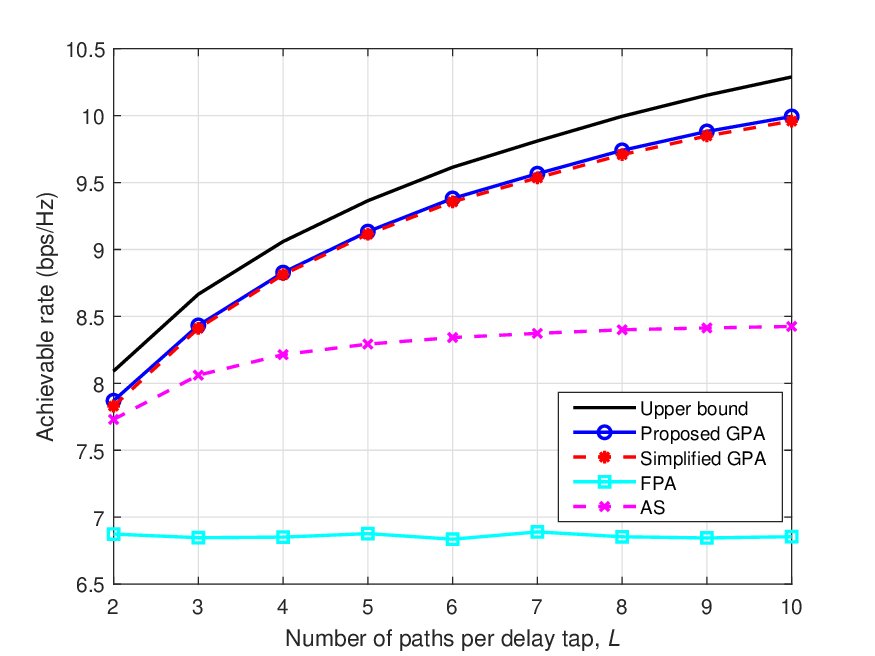}
	\caption{Achievable rates of the proposed and benchmark schemes versus the number of channel paths per delay tap.}
	\label{fig:path}
\end{figure}

Fig. \ref{fig:path} shows the achievable rates of different schemes versus the number of channel paths per delay tap, $L$. As $L$ increases, the small-scale fading of the channel in the spatial domain becomes more pronounced for each clustered tap, and thus the MAs' position optimization can reap higher performance gains in maximizing the achievable rate. In comparison, the FPA and AS schemes cannot exploit such spatial diversity, with their achievable rates non-increasing or slowly increasing with $L$. For example, the proposed MA-OFDM system for $L=3$ can achieve an achievable-rate boost of 1.6 bps/Hz and 0.5 bps/Hz compared to FPA and AS systems, respectively. The corresponding rate boost increases to 3.1 bps/Hz and 1.5 bps/Hz for $L=10$, respectively. Moreover, the performance difference between the proposed and simplified PGA algorithms is negligible and their performance gap to the upper bound is no larger than 0.25 bps/Hz in terms of achievable rate.

\begin{figure}[t]
	\centering
	\includegraphics[width=\figwidth cm]{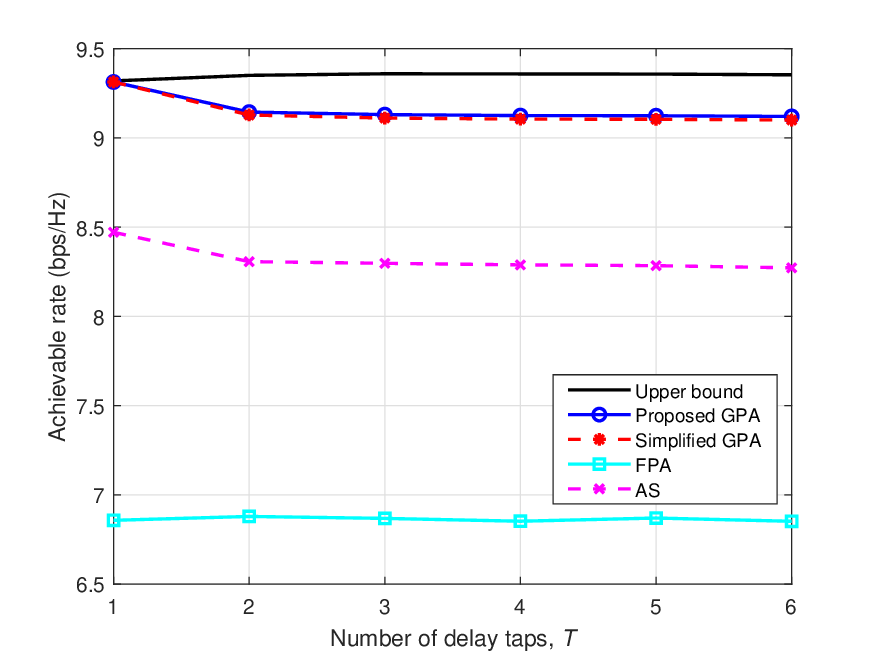}
	\caption{Achievable rates of the proposed and benchmark schemes versus the number of delay taps.}
	\label{fig:tap}
\end{figure}

In Fig. \ref{fig:tap}, we illustrate the achievable rates of different schemes versus the number of delay taps, $T$. It is observed that the achievable rates of the MA and AS systems both decrease with $T$. This is because the increasing number of delay taps in the time domain results in a more pronounced frequency-selective fading of wireless channels. Thus, it becomes more challenging to realize equal channel gain realization over all OFDM subcarriers as shown in Theorem \ref{Theo_rate} via MAs' position optimization in the spatial domain. Nonetheless, the CIR power over each delay tap can also be increased by optimizing the positions of the Tx-MA and Rx-MA in larger spatial regions. From the results in Figs. \ref{fig:path} and \ref{fig:tap}, we infer that MAs exhibit significant superiority to conventional FPAs in scenarios where a large number of channel paths with angular diversity are present within a small delay spread. For example, when the Tx and Rx are both deployed in an indoor environment, the total number of channel paths (i.e., $T \times L$) is large due to the abundant scatterers. Meanwhile, since the scatterers are all located in the indoor area with limited space, the short distance of signal propagation renders a small delay spread of such channel paths (i.e., $T$).

\begin{figure}[t]
	\centering
	\includegraphics[width=\figwidth cm]{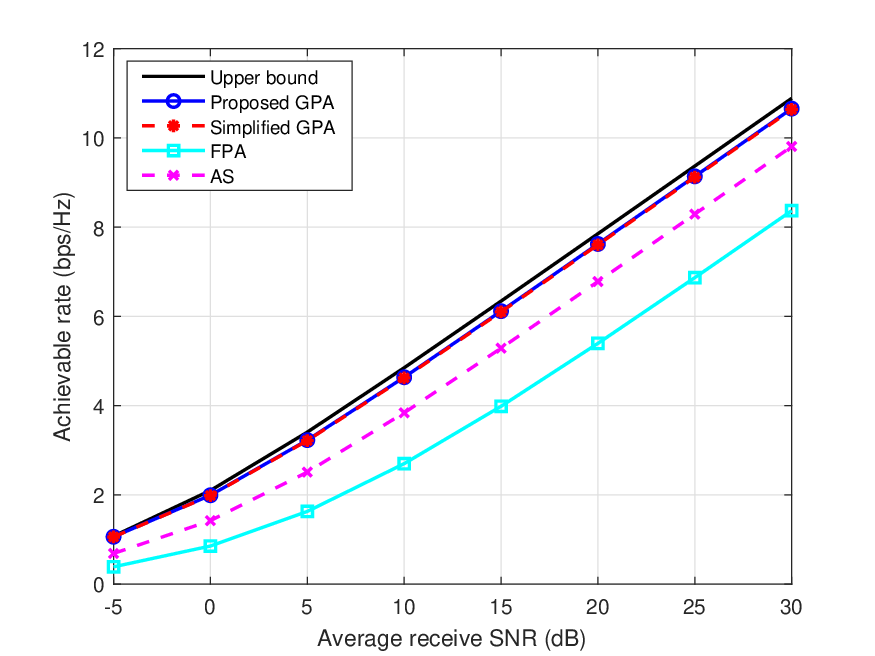}
	\caption{Achievable rates of the proposed and benchmark schemes versus the average SNR at the Rx.}
	\label{fig:SNR}
\end{figure}

Fig. \ref{fig:SNR} evaluates the achievable rates of the proposed and benchmark schemes versus the average SNR at the Rx. It is observed again that the PGA algorithms can achieve a performance close to the rate upper bound. To achieve the same achievable rate, the proposed MA system requires a lower average receive SNR compared to the FPA and AS systems, which can help reduce the transmit power under the same channel condition. As the average receive SNR increases to be larger than 15 dB, the performance gap between the MA and all benchmark schemes converges. Specifically, the MA system can reap 2.5 dB and 7 dB gain in decreasing the transmit power over the FPA and AS systems, respectively. Note that if the number of channel paths per delay tap and the region size for antenna moving increase, the performance gain of MAs over FPAs can be further improved. Moreover, the results in Fig. \ref{fig:SNR} are averaged over a large number of random channel realizations. For site-specific channel realizations, e.g., the channel in Fig. \ref{fig:demo}, the performance improvement provided by MA position optimization can be occasionally even larger, which significantly improves the worst-case performance of OFDM systems.

\begin{figure}[t]
	\centering
	\includegraphics[width=\figwidth cm]{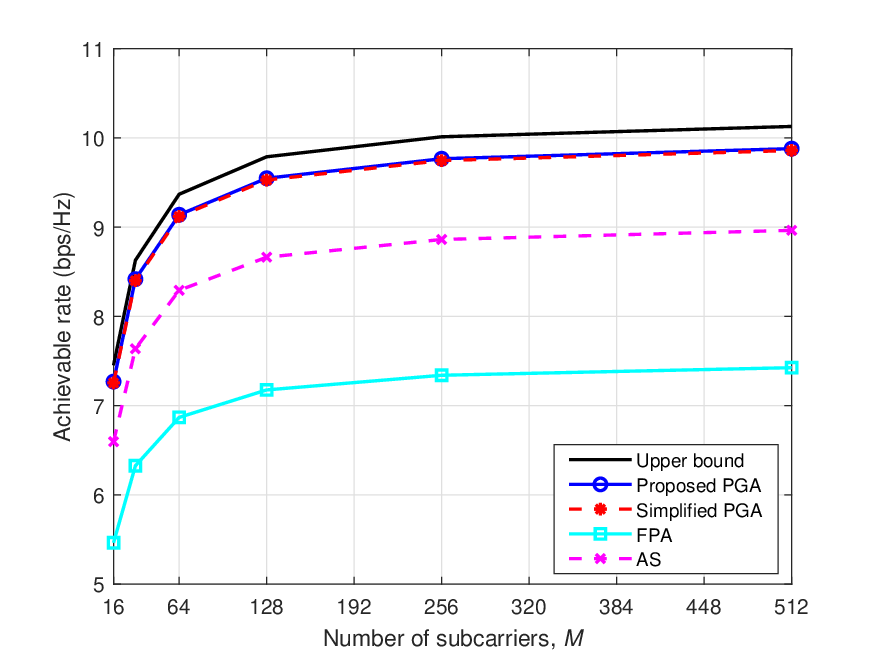}
	\caption{Achievable rates of the proposed and benchmark schemes versus the number of OFDM subcarriers.}
	\label{fig:subcarrier}
\end{figure}

Next, we show in Fig. \ref{fig:subcarrier} the achievable rates of different schemes with varying number of OFDM subcarriers. We can observe that the MA system always outperform FPA and AS systems under different $M$'s and the performance gap between our proposed algorithms and the upper bound does not exceed 0.3 bps/Hz in terms of achievable rate. The achievable rates of all schemes increase with $M$ because the ratio of the CP length to the OFDM symbol length decreases, which can help increase the spectral efficiency. However, the corresponding computational overhead also increases with $M$. For the considered system setup, the increment of the achievable rate is observed to be small if $M$ exceeds 128.

\begin{figure}[t]
	\centering
	\subfigure[$L=6$]{\includegraphics[width=\figwidth cm]{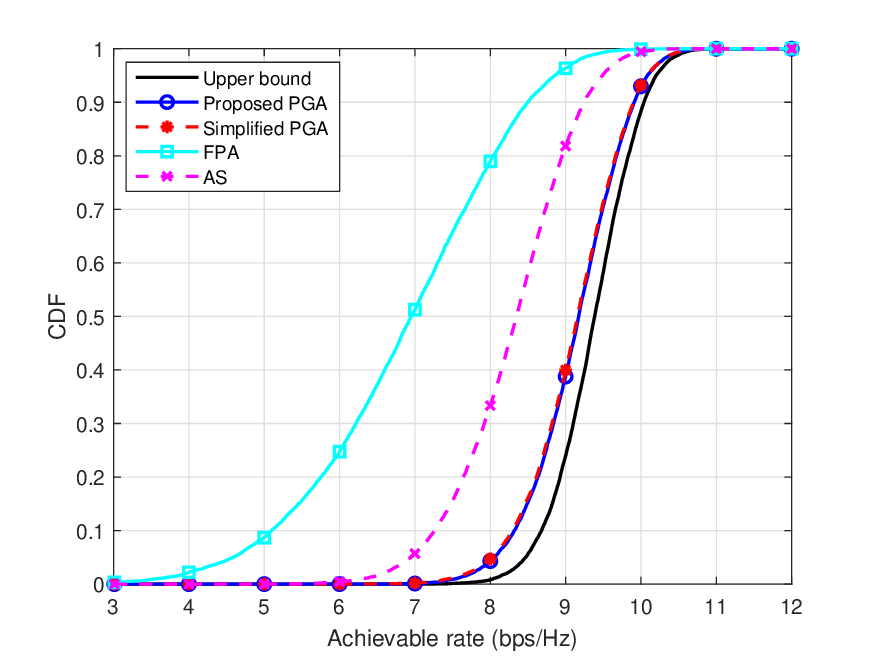} \label{fig:CDF_L6}}
	\subfigure[$L=10$]{\includegraphics[width=\figwidth cm]{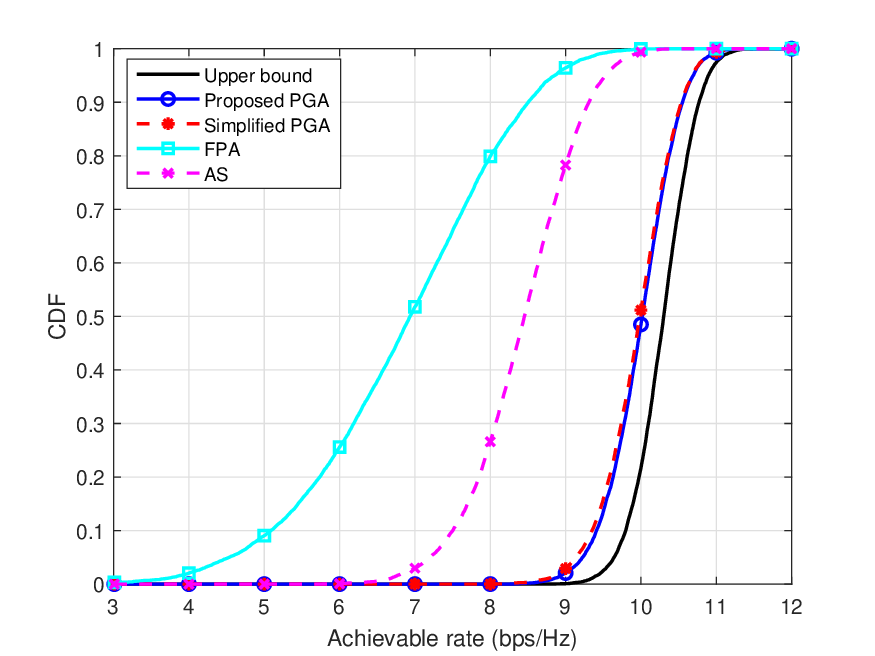} \label{fig:CDF_L10}}
	\caption{The experimental CDF of achievable rates for the proposed and benchmark schemes.}
	\label{fig:CDF}
\end{figure}

Finally, in Fig. \ref{fig:CDF}, we illustrate the experimental cumulative distribution function (CDF) of achievable rates for the proposed and benchmark schemes under different numbers of channel paths per delay tap, $L$. The experimental CDF is calculated based on $10^4$ random channel realizations, i.e., $\mathrm{CDF}(r) = |\mathcal{I}_{R \leq r}|/10^{4}$, where $|\mathcal{I}_{R \leq r}|$ denotes the number of channel realizations with the achievable rate no larger than $r$. It is observed that the CDF curves for the proposed and simplified PGA algorithms have small deviation for all achievable rates and can approach that for the upper bound. Besides, the proposed MA-OFDM system significantly outperforms both the FPA and AS systems in terms of outage probability, especially for a large number of channel paths per delay tap. For example, if we set the threshold of the achievable rate as $r=8$ bps/Hz, the outage probabilities of the MA, FPA, and AS systems for $L=6$ are 4.4\%, 79.5\%, and 33.4\%. respectively. For $L=10$, the outage probabilities of the MA, FPA, and AS systems are 0.03\%, 79.5\%, 26.6\%, respectively. Since the spatial variation of wireless channels is not exploited by the FPA system, it cannot reap the performance gain in terms of outage probability. In contrast, the MA and AS systems can leverage the spatial DoFs for increasing the achievable rate, and thus the outage performance is also improved. Compared to AS systems, the proposed MA system can yield higher performance gain in terms of outage probability because the MA can fully exploit the wireless channel variation in the continuous Tx/Rx regions, especially when $L$ is large.

\section{Conclusion}
In this paper, we investigated the MA-OFDM wideband communications under frequency-selective fading channels. A general multi-tap field-response channel model was adopted to characterize the wireless channel variations in both space and frequency w.r.t. different positions of the MAs at the Tx and Rx sides. We demonstrated the great potential of MA positioning for achieving the desired CIR with maximum channel gains yet arbitrary channel phases over all clustered delay taps. Based on this finding, an upper bound on the OFDM achievable rate was derived in closed form when the size of the Tx/Rx region for antenna movement is arbitrarily large. Furthermore, we developed a PGA algorithm to obtain locally optimal solutions to the MAs' positions for OFDM rate maximization subject to finite-size Tx/Rx regions. To  reduce computational complexity, a simplified PGA algorithm was also provided by maximizing the total channel power gain instead, where the transmit power allocation is not required over the iterations. Simulations were conducted to validate both analytical results and numerical solutions for MA-OFDM wideband communication systems. It was shown that the proposed PGA algorithm can approach the OFDM rate upper bound closely with the increase of Tx/Rx region sizes and outperforms conventional FPA-OFDM and AS-OFDM systems. Moreover, the simplified PGA algorithm with low computational complexities was shown to achieve a comparable achievable rate to the original PGA algorithm with a negligible performance gap. Simulation results also revealed that the proposed MA-OFDM system can yield more significant performance gains over its FPA counterpart under wideband wireless channels with a small number of delay taps each encompassing a large number of independent channel paths. Future research may consider the extension of MA-OFDM to multiuser/multi-antenna communication systems, by investigating their efficient channel estimation and MA position optimization schemes.

\appendices

\section{Proof of Lemma \ref{lemma_LIA}}\label{Appen_LIA}
For notation simplicity, we reorganize the index of the random vector as $\bm{\Theta}=[\Theta_{1}, \Theta_{2}, \cdots, \Theta_{N}]^{\mathrm{T}}$ and its joint PDF as $f_{\bm{\Theta}}(\bm{\vartheta})=\prod_{n=1}^{N} f_{\Theta_{n}}(\hat{\vartheta}_{n})$, with $N=\sum_{\tau=1}^{T} L_{\tau}$ and $\hat{\vartheta}_{n}$ denoting the $n$-th element of $\bm{\vartheta}$. We consider the Lebesgue measure $\mathcal{L}(\cdot)$ defined on real-number interval $[-1,1]$ \cite{bartle2014elements}. Let $\mathcal{Q}_{[-1,1]}$ denote the sets of rational numbers in interval $[-1,1]$. Since the set of rational numbers is countable, the Lebesgue measure of $\mathcal{Q}_{[-1,1]}$ is $\mathcal{L}(\mathcal{Q}_{[-1,1]}^{0}) = 0$. Denoting the upper bound on the PDFs $\{f_{\Theta_{n}}(\hat{\vartheta}_{n})\}$ as $\varrho$, the probability of $\Theta_{n}$ belonging to $\mathcal{Q}_{[-1,1]}$ is given by $\Pr\{\Theta_{n} \in \mathcal{Q}_{[-1,1]}\} \leq \varrho \times \mathcal{L}(\mathcal{Q}_{[-1,1]}) = 0$. Thus, we know that the elements in $\bm{\Theta}$ are irrational numbers with probability 1.

Next, we define $\mathcal{Q}_{[-1,1]}^{n}(\{\hat{\vartheta}_{j}\}_{1 \leq j \leq N}^{j \neq n}) = \mathcal{Q}_{[-1,1]} \oplus \{\hat{\vartheta}_{j}\}_{1 \leq j \leq N}^{j \neq n}$ as the set of numbers within interval $[-1,1]$ which are generated by the linear combinations of elements in $\mathcal{Q}_{[-1,1]} \cup \{\hat{\vartheta}_{j}\}_{1 \leq j \leq N}^{j \neq n}$ over $\mathbb{Q}$, i.e., $\mathcal{Q}_{[-1,1]}^{n}(\{\hat{\vartheta}_{j}\}_{1 \leq j \leq N}^{j \neq n}) = \{x \in [-1,1] \big{|} x = \sum_{q_{k} \in \mathcal{Q}_{[-1,1]} \cup \{\hat{\vartheta}_{j}\}_{1 \leq j \leq N}^{j \neq n}} a_{k} q_{k}, \forall a_{k} \in \mathbb{Q}\}$. According to the above definition, $\mathcal{Q}_{[-1,1]}^{n}(\{\hat{\vartheta}_{j}\}_{1 \leq j \leq N}^{j \neq n})$ is also a countable set with its Lebesgue measure being zero. It is worth noting that $\bm{\vartheta} \in \mathcal{J}$ if and only if $\hat{\vartheta}_{n} \notin \mathcal{Q}_{[-1,1]}^{n}(\{\hat{\vartheta}_{j}\}_{1 \leq j \leq N}^{j \neq n})$ holds for $\forall n=1,2,\cdots,N$. Thus, we have 
\begin{equation}
	\begin{aligned}		
		&\int_{\bm{\vartheta} \in \mathcal{J}} f_{\bm{\Theta}}\left\{\bm{\vartheta}\right\} \dif \bm{\vartheta}  \\
		\geq& 1-\sum \limits_{n=1}^{N} \idotsint_{-1}^{1} 
		\Pr\left\{ \Theta_{n} \in \mathcal{Q}_{[-1,1]}^{n}(\{\hat{\vartheta}_{j}\}_{1 \leq j \leq N}^{j \neq n}) \right\}\\
		&\times \prod_{j=1, j \neq n}^{N} f_{\Theta_{j}}(\hat{\vartheta}_{j}) ~\dif \hat{\vartheta}_{1} \cdots \dif \hat{\vartheta}_{n-1} \dif \hat{\vartheta}_{n+1} \cdots \dif \hat{\vartheta}_{N}\\
		\geq&1-\sum \limits_{n=1}^{N} \idotsint_{-1}^{1} 
		\varrho \times \mathcal{L}\left( \mathcal{Q}_{[-1,1]}^{n}(\{\hat{\vartheta}_{j}\}_{1 \leq j \leq N}^{j \neq n}) \right)\\
		&\times \prod_{j=1, k \neq n}^{N} f_{\Theta_{j}}(\hat{\vartheta}_{j}) ~\dif \hat{\vartheta}_{1} \cdots \dif \hat{\vartheta}_{n-1} \dif \hat{\vartheta}_{n+1} \cdots \dif \hat{\vartheta}_{N}\\
		=&1,
	\end{aligned}
\end{equation}
which thus completes the proof.

\section{Proof of Theorem \ref{Theo_CIR}}\label{Appen_CIR}
To prove Theorem \ref{Theo_CIR}, we introduce the concept of uniformly distributed sequence \cite{kuipers2012uniform}. Specifically, an $N$-dimensional sequence of real vectors $\{\mathbf{u}_{k} \in \mathbb{R}^{N}\}_{1 \leq k \leq K}$ is uniformly distributed modulo 1 if for all $N$-dimensional intervals $\mathcal{B}=\prod_{n=1}^{N} [a_{n}, b_{n}] \subseteq [0, 1)^{N}$, the following equation always holds,
\begin{equation}\label{eq_uniform}
	\begin{aligned}
		&\lim \limits_{K\rightarrow +\infty} \frac{\big{|}\left\{k ~|~(\mathbf{u}_{k} \mod 1) \in \mathcal{B}, 1 \leq k \leq K\right\}\big{|}}{K} \\
		&= \prod_{n=1}^{N} (b_{n}-a_{n}).
	\end{aligned}
\end{equation}

It was shown in \cite[Chap 1]{granville2007equidistribution} that if all elements in an $N$-dimensional real-valued vector $\mathbf{u}$ are linearly independent over $\mathbb{Q}$, the sequence $\{\mathbf{u}_{k}=k\mathbf{u}\}_{k \in \mathbb{N}}$ is uniformly distributed modulo 1. Following this conclusion, we consider vector $\mathbf{u}\triangleq \frac{1}{\lambda} \bm{\vartheta} = \frac{1}{\lambda}[\hat{\vartheta}_{1}^{1}, \cdots, \hat{\vartheta}_{1}^{L_{1}}, \cdots, \hat{\vartheta}_{T}^{1}, \cdots, \hat{\vartheta}_{T}^{L_{T}}]^{\mathrm{T}}$, of which the dimension is given by $N=\sum_{\tau=1}^{T} L_{\tau}$. Note that the elements in $\bm{\vartheta}$ (as well as $\mathbf{u}$) satisfy the LIA condition, i.e., they are linearly independent over $\mathbb{Q}$. Thus, the defined vector sequence $\{\mathbf{u}_{k}=k\mathbf{u}\}_{k \in \mathbb{N}}$ is uniformly distributed modulo 1. Then, for any given small positive $\delta \ll 1$ and $\nu_{\tau}$, $ 1\leq \tau \leq T$, we define an $N$-dimensional interval
\begin{equation}\label{eq_box}
	\begin{aligned}
		\mathcal{B}&\triangleq \prod_{n=1}^{N} \mathcal{B}_{n}
		\triangleq \prod_{\tau=1}^{T} \prod_{\ell=1}^{L_{\tau}} \left[a_{\tau}^{\ell}, a_{\tau}^{\ell}+\frac{\delta}{\Delta}\right],
	\end{aligned}
\end{equation}
with $\Delta \triangleq \frac{1}{2\pi}\sum_{\tau=1}^{T} \|\mathbf{b}_{\tau}\|_{1}$ and $a_{\tau}^{\ell} \triangleq \nu_{\tau}-\frac{\angle b_{\tau}^{\ell}}{2\pi} \mod 1$, $1 \leq \ell \leq L_{\tau}$, $1 \leq \tau \leq T$.
Substituting \eqref{eq_box} into \eqref{eq_uniform}, we have 
\begin{equation}\label{eq_boxsize}
	\begin{aligned}
		&\lim \limits_{K\rightarrow +\infty} \frac{\big{|}\left\{k ~|~(\mathbf{u}_{k} \mod 1) \in \mathcal{B}, 1 \leq k \leq K\right\}\big{|}}{K} \\
		&= \left(\frac{\delta}{\Delta}\right)^{N} > 0.
	\end{aligned}
\end{equation}
It indicates that there always exists an $k \in \mathbb{N}$ ensuring 
\begin{equation}\label{eq_phase_adj}
	a_{\tau}^{\ell} \leq \frac{k}{\lambda} \hat{\vartheta}_{\tau}^{\ell} \leq a_{\tau}^{\ell}+\frac{\delta}{\Delta}, ~1 \leq \ell \leq L_{\tau}, ~1 \leq \tau \leq T.
\end{equation}
By setting the MA positioning vectors as $\mathbf{t}=[k, 0, 0]^{\mathrm{T}}$ and $\mathbf{r}=[0, 0, 0]^{\mathrm{T}}$, we have
\begin{equation}\label{eq_CIRgap}
	\begin{aligned}
		&\left|\|\mathbf{b}_{\tau}\|_{1} e^{j2\pi\nu_{\tau}} - h_{\tau}(\mathbf{t}, \mathbf{r})\right| \\
		\overset{(a)}{=}&\left|\|\mathbf{b}_{\tau}\|_{1} e^{j2\pi\nu_{\tau}} - \sum \limits_{\ell=1}^{L_{\tau}} b_{\tau}^{\ell} e^{j \frac{2\pi}{\lambda}k \hat{\vartheta}_{\tau}^{\ell}}\right|\\
		=&\left|\sum \limits_{\ell=1}^{L_{\tau}} |b_{\tau}^{\ell}| - \sum \limits_{\ell=1}^{L_{\tau}} b_{\tau}^{\ell} e^{j2\pi (\frac{k}{\lambda} \hat{\vartheta}_{\tau}^{\ell}-\nu_{\tau})}\right|\\
		\overset{(b)}{\leq} & \sum \limits_{\ell=1}^{L_{\tau}}\left| |b_{\tau}^{\ell}| - b_{\tau}^{\ell} e^{j2\pi (\frac{k}{\lambda} \hat{\vartheta}_{\tau}^{\ell}-\nu_{\tau})}\right|\\
		= & \sum \limits_{\ell=1}^{L_{\tau}} |b_{\tau}^{\ell}| \times \left|1-e^{j2\pi (\frac{k}{\lambda} \hat{\vartheta}_{\tau}^{\ell}-\nu_{\tau}+\frac{\angle b_{\tau}^{\ell}}{2\pi})}\right|\\
		\overset{(c)}{=} & \sum \limits_{\ell=1}^{L_{\tau}} |b_{\tau}^{\ell}| \times \left|1-e^{j2\pi (\frac{k}{\lambda} \hat{\vartheta}_{\tau}^{\ell}-a_{\tau}^{\ell})}\right|\\
		\overset{(d)}{\leq} & \sum \limits_{\ell=1}^{L_{\tau}} |b_{\tau}^{\ell}| \times 2\pi \frac{\delta}{\Delta}
		\overset{(e)}{\leq}  \delta, ~1 \leq \tau \leq T,
	\end{aligned}
\end{equation}
where step $(a)$ holds by substituting $\mathbf{t}=[k, 0, 0]^{\mathrm{T}}$ and $\mathbf{r}=[0, 0, 0]^{\mathrm{T}}$ into $h_{\tau}(\mathbf{t}, \mathbf{r})$ in \eqref{eq_CIR}; step $(b)$ holds based on the triangle inequality; step $(c)$ holds according to the definition of $a_{\tau}^{\ell} = \nu_{\tau}-\frac{\angle b_{\tau}^{\ell}}{2\pi} \mod 1$; step $(d)$ holds because of \eqref{eq_phase_adj} and the fact of $|1-e^{jx}| \leq X$ for $0 \leq x \leq X$; and step $(e)$ holds due to $\Delta = \frac{1}{2\pi}\sum_{\tau=1}^{T} \|\mathbf{b}_{\tau}\|_{1}$. This thus completes the proof.


\section{Proof of Theorem \ref{Theo_rate}}\label{Appen_rate}
According to \eqref{eq_CIR} and Theorem \ref{Theo_CIR}, the total power of the CIR vector is upper-bounded by 
\begin{equation}\label{eq_CIR_bound}
	\left\|\mathbf{h}(\mathbf{t},\mathbf{r})\right\|_{2}^{2} \leq \sum \limits_{\tau=1}^{T} \|\mathbf{b}_{\tau}\|_{1}^{2} = G.
\end{equation}
According to \eqref{eq_CFR}, the total power of the CFR vector is upper-bounded by
\begin{equation}
	\begin{aligned}
		&\left\|\mathbf{c}(\mathbf{t},\mathbf{r})\right\|_{2}^{2} = \left\|\mathbf{D}\mathbf{h}(\mathbf{t},\mathbf{r})\right\|_{2}^{2} =\mathbf{h}(\mathbf{t},\mathbf{r})^{\mathrm{H}}\mathbf{D}^{\mathrm{H}}\mathbf{D}\mathbf{h}(\mathbf{t},\mathbf{r}) \\
		&=M \mathbf{h}(\mathbf{t},\mathbf{r})^{\mathrm{H}}\mathbf{I}_{M}\mathbf{h}(\mathbf{t},\mathbf{r})=M\left\|\mathbf{h}(\mathbf{t},\mathbf{r})\right\|_{2}^{2} \leq MG.
	\end{aligned}
\end{equation}
Note that according to Theorem \ref{Theo_CIR}, the upper bound on the total channel power gain, $G$, can be asymptotically approached by optimizing the positions of the Tx-MA and Rx-MA. Thus, the optimal objective value for problem \eqref{eq_problem} is upper-bounded by the following relaxed problem:
\begin{subequations}\label{eq_problem_relax}
	\begin{align}
		\mathop{\max}\limits_{\mathbf{v},\mathbf{p}}~~~
		&\tilde{R}(\mathbf{v},\mathbf{p}) \label{eq_problem_relax_a}\\
		\mathrm{s.t.}~~~~ &p_{m} \geq 0,~1 \leq m \leq M, \label{eq_problem_relax_b}\\
		&\sum \limits_{m=1}^{M} p_{m} \leq P, \label{eq_problem_relax_c}\\
		&v_{m} \geq 0,~1 \leq m \leq M, \label{eq_problem_relax_d}\\
		&\sum \limits_{m=1}^{M} v_{m} \leq MG, \label{eq_problem_relax_e}
	\end{align}
\end{subequations}
where $\mathbf{v}=[v_{1}, v_{2}, \cdots, v_{M}]^{\mathrm{T}}$ is an $M$-dimensional vector representing the channel power gain over all the subcarriers and $\tilde{R}(\mathbf{v},\mathbf{p}) \triangleq \frac{1}{M+M_{\mathrm{CP}}} \sum_{m=1}^{M}\log_{2} \left(1+\frac{v_{m}p_{m}}{\sigma^{2}}\right)$ is the OFDM achievable rate.

Let $p_{m}^{\star}$ and $v_{m}^{\star}$ denote the optimal solution for transmit power allocation and channel power gain realization over the $m$-th OFDM subcarrier, respectively. It is easy to verify that problem \eqref{eq_problem_relax} is convex w.r.t. $\mathbf{p}$ and its optimal solution satisfies the water-filling criterion, i.e.,
\begin{equation}\label{eq_power_allo}
	p_{m}^{\star} = \max \left\{\mu_{1} - \frac{\sigma^{2}}{v_{m}^{\star}}, 0\right\},~1 \leq m \leq M,
\end{equation}
where $\mu_{1}$ should be selected to guarantee $\sum_{m=1}^{M} p_{m}^{\star} = P$. In the high-SNR regime, we always have $\mu_{1} - \frac{\sigma^{2}}{v_{m}^{\star}} >0$, $1 \leq m \leq M$, and thus \eqref{eq_power_allo} can be simplified as
\begin{equation}\label{eq_power_allo2}
	p_{m}^{\star} = \mu_{1} - \frac{\sigma^{2}}{v_{m}^{\star}} = \frac{P}{M} + \frac{1}{M} \sum \limits_{n=1}^{M} \frac{\sigma^{2}}{v_{n}^{\star}} - \frac{\sigma^{2}}{v_{m}^{\star}},~1 \leq m \leq M.
\end{equation}

Next, we show the proof by contradiction. Specifically, we assume that for the optimal $\mathbf{v}^{\star}$, there exist two elements satisfying $v_{m_{1}}^{\star} > v_{m_{2}}^{\star}$, $1 \leq m_{1} \neq m_{2} \leq M$. Then, we define a new solution $\mathbf{v}^{\circ}$ for problem \eqref{eq_problem_relax} as 
\begin{equation}\label{eq_channel_power_allo_new}
	v_{m}^{\circ} = \left\{ \begin{aligned}
		&v_{m}^{\star},~1 \leq m \leq M,~m \neq m_{1}, m_{2},\\
		&(v_{m_{1}}^{\star}+v_{m_{2}}^{\star})/2,~m=m_{1}, m_{2},
	\end{aligned}\right.
\end{equation}
which satisfy constraints \eqref{eq_problem_relax_d} and \eqref{eq_problem_relax_e}. Accordingly, we can derive the difference of achievable rates under solutions $(\mathbf{v}^{\circ},\mathbf{p}^{\star})$ and $(\mathbf{v}^{\star},\mathbf{p}^{\star})$ as
\begin{equation}\label{eq_rate_diff}
	\begin{aligned}
		& \left(M+M_{\mathrm{CP}}\right) \times \left(\tilde{R}(\mathbf{v}^{\circ},\mathbf{p}^{\star}) - \tilde{R}(\mathbf{v}^{\star},\mathbf{p}^{\star})\right)\\
		=&\log_{2} \left(1+\frac{v_{m_{1}}^{\circ}p_{m_{1}}^{\circ}}{\sigma^{2}}\right) 
		+ \log_{2} \left(1+\frac{v_{m_{2}}^{\circ}p_{m_{2}}^{\circ}}{\sigma^{2}}\right)\\
		&- \log_{2} \left(1+\frac{v_{m_{1}}^{\star}p_{m_{1}}^{\star}}{\sigma^{2}}\right) 
		- \log_{2} \left(1+\frac{v_{m_{2}}^{\star}p_{m_{2}}^{\star}}{\sigma^{2}}\right)\\
		\triangleq &\log_{2} \left(1+\frac{\xi}{\left(1+\frac{v_{m_{1}}^{\star}p_{m_{1}}^{\star}}{\sigma^{2}}\right) \times \left(1+\frac{v_{m_{2}}^{\star}p_{m_{2}}^{\star}}{\sigma^{2}}\right)}\right),
	\end{aligned}
\end{equation}
with 
\begin{equation}\label{eq_rate_diff_xi}
	\begin{aligned}
		\xi = &\left(1+\frac{v_{m_{1}}^{\circ}p_{m_{1}}^{\circ}}{\sigma^{2}}\right) \times \left(1+\frac{v_{m_{2}}^{\circ}p_{m_{2}}^{\circ}}{\sigma^{2}}\right)\\
		&-\left(1+\frac{v_{m_{1}}^{\star}p_{m_{1}}^{\star}}{\sigma^{2}}\right) \times \left(1+\frac{v_{m_{2}}^{\star}p_{m_{2}}^{\star}}{\sigma^{2}}\right)\\
		=&\frac{v_{m_{1}}^{\star}+v_{m_{2}}^{\star}}{2\sigma^{2}}p_{m_{1}}^{\star} + \frac{v_{m_{1}}^{\star}+v_{m_{2}}^{\star}}{2\sigma^{2}}p_{m_{2}}^{\star}
		- \frac{v_{m_{1}}^{\star}p_{m_{1}}^{\star}}{\sigma^{2}} - \frac{v_{m_{2}}^{\star}p_{m_{2}}^{\star}}{\sigma^{2}}\\
		&+ \left(\frac{v_{m_{1}}^{\star}+v_{m_{2}}^{\star}}{2\sigma^{2}}\right)^{2}p_{m_{1}}^{\star}p_{m_{2}}^{\star}
		- \frac{v_{m_{1}}^{\star}v_{m_{2}}^{\star}p_{m_{1}}^{\star}p_{m_{2}}^{\star}}{\sigma^{4}} \\
		=&\frac{v_{m_{1}}^{\star}-v_{m_{2}}^{\star}}{2\sigma^{2}} \left(p_{m_{2}}^{\star}-p_{m_{1}}^{\star}\right) 
		+ \left(\frac{v_{m_{1}}^{\star}-v_{m_{2}}^{\star}}{2\sigma^{2}}\right)^{2} p_{m_{1}}^{\star}p_{m_{2}}^{\star}\\
		\overset{(e)}{=}& \frac{v_{m_{1}}^{\star}-v_{m_{2}}^{\star}}{2\sigma^{2}} \left(\frac{\sigma^{2}}{v_{m_{1}}^{\star}}-\frac{\sigma^{2}}{v_{m_{2}}^{\star}}\right) 
		+ \left(\frac{v_{m_{1}}^{\star}-v_{m_{2}}^{\star}}{2\sigma^{2}}\right)^{2} p_{m_{1}}^{\star}p_{m_{2}}^{\star}\\
		=&\left(\frac{v_{m_{1}}^{\star}-v_{m_{2}}^{\star}}{2\sigma^{2}}\right)^{2} \left(p_{m_{1}}^{\star}p_{m_{2}}^{\star}-\frac{2\sigma^{4}}{v_{m_{1}}^{\star}v_{m_{2}}^{\star}}\right),
	\end{aligned}
\end{equation}
where step $(e)$ is based on the definition of $p_{m}^{\star}$ in \eqref{eq_power_allo2}. Note that in the high-SNR regime, we have $p_{m}^{\star} v_{m}^{\star} \gg \sigma^{2}$, $m=m_{1},m_{2}$. Thus, we have $p_{m_{1}}^{\star}p_{m_{2}}^{\star}-\frac{2\sigma^{4}}{v_{m_{1}}^{\star}v_{m_{2}}^{\star}} > 0$, which indicates $\xi > 0$. Furthermore, according to \eqref{eq_rate_diff}, we have $\tilde{R}(\mathbf{v}^{\circ},\mathbf{p}^{\star}) - \tilde{R}(\mathbf{v}^{\star},\mathbf{p}^{\star}) >0$, which contradicts to the fact that $(\mathbf{v}^{\star},\mathbf{p}^{\star})$ is the optimal solution for problem \eqref{eq_problem_relax}. 

As such, we can conclude that $v_{m_{1}}^{\star} = v_{m_{2}}^{\star}$ always holds for any $1 \leq m_{1}, m_{2} \leq M$. To maximize $\tilde{R}(\mathbf{v},\mathbf{p})$ and satisfy constraint \eqref{eq_problem_relax_e}, the optimal channel power gain realization for problem \eqref{eq_problem_relax} is thus given by $v_{m}^{\star} = G$, $1 \leq m \leq M$. Substituting $v_{m}^{\star} = G$ into \eqref{eq_power_allo2}, the optimal power allocation can be simplified as $p_{m}^{\star}=\frac{P}{M}$, $1 \leq m \leq M$. Substituting the optimal $\mathbf{v}^{\star}$ and $\mathbf{p}^{\star}$ into \eqref{eq_problem_relax_a}, we obtain $\tilde{R}(\mathbf{v}^{\star},\mathbf{p}^{\star}) = \frac{1}{M+M_{\mathrm{CP}}} \sum_{m=1}^{M}\log_{2} \left(1+\frac{GP}{M\sigma^{2}}\right) = \frac{M}{M+M_{\mathrm{CP}}} \log_{2} \left(1+\frac{GP}{M\sigma^{2}}\right)$, which is an upper bound on the OFDM achievable rate in the high-SNR regime. This thus completes the proof.

\bibliographystyle{IEEEtran} 
\bibliography{IEEEabrv,ref_zhu}

\end{document}